\DeclareMathAlphabet{\mathpzc}{OT1}{pzc}{m}{it}
\def\pig{\boldsymbol{\pi}} 
\def\Ab{{\bf A}}
\def\B{{\rm B}}
\def\Jb{{\bf J}}
\def\rb{{\bf r}}
\def\pb{{\bf p}}
\def\fb{{\bf f}}
\def\E{{\rm E}}
\def\I{{\rm I}}
\def\Var{{\rm var}}
\def\Pr{{\rm Pr}}
\def\R{\mathbb R}
\def\N{\mathbb N}
\def\1{\mathbf{1}}
\def\ybar{\overline{Y}}
\title{Sampling Designs on Finite Populations\\ with  Spreading Control Parameters}
\newtheorem{lemma}{Lemma}[section]
\newtheorem{definition}{Definition}[section]
\newtheorem{proposition}{Proposition}[section]
\newtheorem{remark}{Remark}[section]
\newtheorem{corollaire}{Corollary}[section]
\newtheorem{exam}{Example}[section]
\begin{document}


\author{Yves Tillé\thanks{
Corresponding author. E-mail: yves.tille@unine.ch.}, \emph{\large University of Neuch\^atel} \and Lionel Qualité, \emph{\large Swiss Federal Office of Statistics and  University of Neuch\^atel} \and Matthieu Wilhelm,  \emph{\large University of Neuch\^atel}}


\maketitle
\begin{abstract}
We present new sampling methods in finite population that allow one to control the joint inclusion probabilities of units and especially the spreading of sampled units in the population. They are based on the use of renewal chains and multivariate discrete distributions to generate the difference of population ranks between successive selected units. With a Bernoulli sampling design, these differences follow a geometric distribution, and with a simple random sampling design they follow a negative hypergeometric distribution. We propose to use other distributions and introduce a large class of sampling designs with and without fixed sample size. The choice of the rank-difference distribution allows us to control units joint inclusion probabilities with a relatively simple method and closed form formula. Joint inclusion probabilities of neighboring units can be chosen to be larger, or smaller, compared to those of Bernoulli or simple random sampling, thus allowing more or less spread  of the sample in the population. This can be useful when neighboring units have similar characteristics or, on the contrary, are very different. A set of simulations illustrates the qualities of this method.
\end{abstract}

\section{Introduction}

In this paper, we propose sampling methods for fixed and random sample sizes. We more particularly focus on the spacings that are the difference of population ranks between two successive selected units. We propose a large set of new methods that allows one to control the spacings and thus the joint inclusion probabilities of population units in the sample. These methods are useful in that they allow one to make less (or more) likely the selection of neighboring units. Indeed, when the variable of interest takes similar values on neighboring units, spreading the sample improves estimation because the selection of similar units is avoided.

A sampling design is a probability distribution on all the finite subsets of a population. It can be implemented by means of sampling algorithms. Several different sampling algorithms can implement the same sampling designs. Examples are given in \citet{til:06} where a large number of algorithms is given for designs like Simple Random Sampling (SRS) with and without replacement or maximum entropy sampling designs. Algorithms such that the decision of selecting or not a unit into the sample is taken for each population unit successively according to the order of the population sampling frame are called ``sequential'' or ``one-pass'' algorithms. These algorithms are particularly useful when the population list is dynamic, like on a production chain or in real time sampling applications.

Systematic sampling is one of the most common sampling designs. It has been studied among others by \citet{mad:mad:44}, \citet{coc:46}, \citet{mad:49}, \citet{bel:rao:75}, \citet{iac:82}, \citet{iac:83}, \citet{mur:rao:88}, \citet{bel:88b}, \citet{Bell:Sutr:vari:1988}, and \citet{pea:qua:til:07}. One advantage of systematic sampling is that it spreads the sample very well over the population, thus allowing one to get precise estimators for totals and averages in the case of ``auto-correlated'' interest variables. Indeed, it can be shown to be an optimal design in this case under some conditions \citep{Bondesson86}. However, it presents the important drawback that lots of unit couples have null joint inclusion probabilities. This makes impossible an unbiased estimation of the variance.

This drawback has led to a quest for other sampling designs that would retain good estimation properties. \citet{dev:98a} proposed the Deville-systematic method,  also called ordered pivotal method by \citet{chauvet2012characterization} (see also \citet[][pp. 128-130]{til:06}). \citet{til:96c} proposed a moving stratification algorithm that avoids the selection of neighboring units. \citet{bon:tho:08} and \citet{Grafstrom2010982} also proposed a method that allows one to control joint-inclusion probabilities. Recently, \cite{2015arXiv151006618L} proposed using determinantal point processes that are known for their repulsiveness property (see for example \citet[][p.138]{Daley02}. This last method necessitates to work with a huge matrix.

We advocate the use of point processes with simple specifications, motivated by usual sampling designs: the systematic design has deterministic spacings between selected units, the Bernoulli sampling design (see for example \citet[][pp.43--44]{til:06}) has geometrically distributed spacings, and circular spacings of the simple random sampling design follow a negative hypergeometric distribution (see \citet{vit:84,vit:85,vit:87}). In this paper, we will use other distributions to tune the joint selection probability of neighboring units. For each of these methods, we are able to compute positive joint inclusion probabilities and unbiased variance estimators. Special attention to edge effects must be given to ensure correct first-order inclusion probabilities. Part of these sampling designs, with independent and identically distributed spacings, were introduced by \citet{Bondesson86}. 

The paper is organized as follows: Section~\ref{rappels} is devoted to the main definitions of survey sampling theory. Sections~\ref{renewal} and~\ref{spreading} present renewal chain sampling designs for random size samples. In Sections~\ref{circular} and~\ref{spreading2}, we discuss fixed size sampling obtained through the generation of circular spacings with multivariate discrete distributions. Simulation results are given in Section~\ref{simulations}. The paper ends with our conclusions in Section~\ref{conclusion}.

\section{Sampling from a finite population}\label{rappels}

Consider the finite population of $N$ units, $U=\left\{1,\dots,N\right\}$. A sample without replacement of $U$ is a subset $s\subset U$. A sampling design $P(.)$ is a probability distribution on samples,
$$
P(s)\geq 0,\ s\subset U, \mbox{ such that }\sum_{s \subset U}P(s)=1.
$$
Let $S$ denote the random sample, so that $\Pr(S=s)=P(s)$. The sample size $n=\# S$ can be random or not. The inclusion probability of unit $k$ is its probability of being selected into a sample
$$
\pi_{k}=\Pr(k \in S)=\sum_{s \ni k } P(s).
$$
The joint inclusion probability of units $k$ and $\ell$ is their probability of being selected together into a sample
$$
\pi_{k\ell}=\pi_{\ell k}=\Pr(k \mbox{ and } \ell\in S)=\sum_{s \ni {k,\ell}} P(s).
$$

Let $Y$ be a variable of interest and let $y_k$ be the value of $Y$ associated to unit $k$ of the population. The \cite{hor:tho:52} estimator
is defined by
$$
\widehat{Y} =  \sum_{k\in S}\frac{y_{k}}{\pi_{k}}.
$$
It is an unbiased estimator of the population total
$$
t_{Y} = \sum_{k\in U} y_{k},
$$
provided that $\pi_{k}>0,$ $k\in U$.
Let
$$
\Delta_{k\ell}=
\left\{
\begin{array}{ll}
\pi_{k\ell}-\pi_{k}\pi_{\ell} & \mbox{ if }k\neq \ell, \\
\pi_{k}(1-\pi_{k}) & \mbox{ if }k= \ell.
\end{array}
\right.
$$
The variance of the HT-estimator is 
$$
\Var\left(\widehat{Y}\right) =  \sum_{k\in U}\sum_{\ell \in U} \frac{y_{k} y_{\ell}}{\pi_k\pi_{\ell}}\Delta_{k\ell}.
$$
If the sampling design has a fixed size, the variance can also be written as (see \citet{sen:53,yat:gru:53})
$$
\Var\left(\widehat{Y}\right) = -\frac{1}{2} \sum_{k\in U}\sum_{\substack{\ell\in U \\ \ell \neq k}}
\left(\frac{y_{k}}{\pi_{k}} - \frac{y_{\ell}}{\pi_{\ell}} \right)^2 \Delta_{k\ell}.
$$
Estimators can be derived from these two expressions. For the general case, the \cite{hor:tho:52} variance estimator is given by:
\begin{equation}\label{vht}
\widehat{\Var}_{HT}\left(\widehat{Y}\right) = \sum_{k\in S}\sum_{\ell \in S} \frac{y_{k} y_{\ell}}{\pi_{k}\pi_{\ell}}\frac{\Delta_{k\ell}}{\pi_{k\ell}},
\end{equation}
where $\pi_{kk}=\pi_{k}$. When the sample size is fixed, the \citet{sen:53} and \citet{yat:gru:53} variance estimator is given by
\begin{equation}\label{vsyg}
\widehat{\Var}_{SYG}\left(\widehat{Y}\right) = -\frac{1}{2} \sum_{k\in S}\sum_{\substack{\ell\in S\\ \ell \neq k}}
\left(\frac{y_{k}}{\pi_{k}} - \frac{y_{\ell}}{\pi_{\ell}} \right)^{2} \frac{\Delta_{k\ell}}{\pi_{k\ell}}.
\end{equation}
These estimators are unbiased provided that $\pi_{k\ell}>0$, $k \neq \ell \in U$. Estimator~(\ref{vsyg}) is non-negative when $\Delta_{k\ell}\leq 0$, $k \neq \ell$ (Sen-Yates-Grundy conditions).

\section{Renewal chain sampling designs}\label{renewal}
The idea of selecting samples through the use of renewal processes is not new. It can be traced back at least to \citet{Bondesson86} (see also \citet{Meister04}). We give a different presentation in this section in that we focus on the parametrization of the distribution of spacings between selected units whereas \citet{Bondesson86} and \citet{Meister04} focus on the parametrization of the so-called \emph{renewal sequence}, the conditional inclusion probabilities given the past. Their aim was to provide solutions for real time sampling, and the proposed methods are intrinsically sequential, allowing one to spread the sample by introducing a negative correlation between the sample inclusion indicators. \citet{bon:tho:08} generalize this idea using a splitting method (see \citet[][]{dev:til:98}) that allows use of a unequal probability sampling designs for real time sampling.

\subsection{Definition}

In this section, we present a family of sampling algorithms that are parametrized by a discrete probability distribution. By a careful choice of the generating distribution, we obtain sampling designs with desirable properties.
Consider a sequence $J_{1},\dots,J_{N}$ of independently and identically distributed (i.i.d.) random variables in $\N^{*}=\{1,2,3,\dots\}$. The partial sums $S_{j}=\sum_{i=1}^{j}J_{i}$, $j\ge 1$, form a discrete process that is called a simple renewal chain (see for example \citet[][]{fel:71}, \citet[][p.18]{bar:lim:08}), by analogy with renewal processes (see \citet[][]{cox1962renewal, Daley02, Mitov14}). Using these $J_{i}$'s as spacings (jumps) between successive units selected into the sample, we obtain the family of sampling designs of Definition~\ref{def1}.
\begin{definition}\label{def1}
A sampling design is said to be a (simple) renewal chain sampling design if its random sample can be written
$$
\tilde{S}=\{1,\dots,N\}\bigcap \left\{\sum_{i=1}^{j}J_{i},\ 1\leq j\leq N\right\},
$$
where $J_{1},\dots,J_{N}$ are i.i.d. random variables in $\N^{*}$.
\end{definition}
The first-order inclusion probability of a renewal chain design can be obtained from the common distribution $f(\cdot)$ of the $J_{i}$'s:
\begin{equation}\label{eqpi}
\pi_{k}=\Pr(k \in \tilde{S})=\sum_{j=1}^{k}f^{j*}(k),
\end{equation}
where $f^{j*}(\cdot)$ is the distribution of the sum of $j$ i.i.d. variables with distribution $f(\cdot)$. Indeed, unit $k$ is selected if $J_{1}=k$, or $J_{1}+J_{2}=k$, or~$\cdots$, or $J_{1}+\cdots+J_{k}=k$. These events are non-overlapping thanks to the $J_{i}$'s being positive. We obtain that:
$$
\pi_{k} = \sum_{j=1}^{k} \Pr\left(\sum_{i=1}^{j} J_{i}=k \right),
$$
which is exactly Equation~(\ref{eqpi}). It is a well-known property of renewal process theory given, for example, in \citet[][p. 21]{bar:lim:08}, \citet[][p. 53]{cox1962renewal} or in \citet[][pp. 44-47]{Mitov14}.

Even with i.i.d. spacings, a simple renewal chain sampling design usually has unequal first order inclusion probabilities, as we can see in Example~\ref{example1}.
\begin{exam}\label{example1}
Let $J_{i}$, $i\in\N^{*}$ be a sequence of i.i.d. variables such that $\Pr(J_{i}=1)=1/2$ and $\Pr(J_{i}=2)=1/2$. Then,
\begin{eqnarray*}
\pi_{1} & = & \Pr(J_{1}=1)=1/2,\\
\pi_{2} & = & \Pr(J_{1}=2)+\Pr(J_{1} +J_{2}=2)=1/2+1/4 = 3/4,\\
\pi_{3} & = & \Pr(J_{1}+J_{2}=3)+\Pr(J_{1}+J_{2}+J_{3}=3) =1/2+1/8=5/8,\\
\pi_{4} & = & \Pr(J_{1}+J_{2}=4)+\Pr(J_{1}+J_{2}+J_{3}=4) +\Pr(J_{1}+J_{2}+J_{3}+J_{4}=4) 
=11/16,\\
&\vdots&
\end{eqnarray*}
\end{exam}

\subsection{Equilibrium renewal chains}

A delayed renewal chain is a discrete process $(S_{j})_{j \in \N}$ with $S_{j}=\tilde{J}_{0}+\sum_{i=1}^{j}J_{i}$,  where the $J_{i}$'s, $i\geq 1$ are i.i.d. random variables taking values in $\N^{*}$ and $\tilde{J}_{0}$ is an independent random variable taking values in $\N$ (see e.g. \citet[][p. 31]{bar:lim:08}). Of particular interest is the delayed renewal chain obtained when the distribution of $\tilde{J}_{0}$ is obtained from the distribution of $J_{1}$ using
\begin{equation}\label{J0}
\Pr(\tilde{J}_{0}=k)=\frac{\Pr(J_{1}\geq k+1)}{\E(J_{1})},\ k\in\N,
\end{equation}
provided that $\E(J_{1})$ exists. The distribution of $\tilde{J}_{0}$ is called the stationary or equilibrium distribution of the renewal chain and the resulting delayed renewal chain is called an equilibrium renewal chain. As written by \citet[][Proposition 2.2]{bar:lim:08}, this choice of the initial distribution $\tilde{J}_{0}$ of the delayed renewal chain is the only one where all $k \in \N$ have the same probability of being in the sample path. Proposition~\ref{fondamental_disc} is a general result of renewal process theory (see for example \citet[][p. 46]{Mitov14}) that we applied to the discrete case. We propose a direct proof of Proposition~\ref{fondamental_disc} in Appendix.
\begin{proposition}\label{fondamental_disc}
If $f(\cdot)$ is a probability distribution on $\N^{*}$ with cumulative distribution function $F(\cdot)$, expectation $\mu$, and if $f_{0}(\cdot)$ is defined by
$
f_{0}(k)=f(\{k+1,\dots\})/\mu, \ k\in\mathbb{N},
$
then $f_{0}(\cdot)$ is a probability distribution and
\begin{equation}\label{flat}
f_{0}(k)+\sum_{t=1}^{k} f_{0}(k-t)\sum_{j=1}^{t}f^{j*}(t) = \frac{1}{\mu},\mbox{ for all } k\geq 1.
\end{equation}
\end{proposition}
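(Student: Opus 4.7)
The plan is to prove the two assertions separately. That $f_{0}$ is a probability distribution follows from a one-line tail-expectation computation: it is clearly non-negative, and by Tonelli,
$$
\sum_{k=0}^{\infty} f_{0}(k) = \frac{1}{\mu}\sum_{k=0}^{\infty}\Pr(J_{1}\geq k+1)=\frac{1}{\mu}\sum_{j=1}^{\infty} j\,f(j) = 1.
$$

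For the renewal identity, a preliminary observation is that the displayed equation also holds at $k=0$: the convolution sum is empty and $f_{0}(0)=\Pr(J_{1}\geq 1)/\mu=1/\mu$. Writing $u(t)=\sum_{j=1}^{t}f^{j*}(t)$ for $t\geq 1$ and $u(0)=0$, the claim amounts to $f_{0}(k)+(f_{0}*u)(k)=1/\mu$ for every $k\geq 0$, i.e.\ an equality of sequences, which is naturally attacked with probability generating functions on $|z|<1$. Two elementary computations are key. First, swapping the order of summation in the tail expression yields
$$
\hat{f}_{0}(z)=\sum_{k\geq 0}f_{0}(k)z^{k}= \frac{1}{\mu}\sum_{j\geq 1}f(j)\sum_{k=0}^{j-1}z^{k}=\frac{1-\hat{f}(z)}{\mu(1-z)},
$$
where $\hat{f}(z)=\sum_{k\geq 1}f(k)z^{k}$. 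Second, the generating function of a $j$-fold convolution is the $j$-th power, so
$$
U(z)=\sum_{t\geq 1}u(t)z^{t}=\sum_{j\geq 1}\hat{f}(z)^{j}=\frac{\hat{f}(z)}{1-\hat{f}(z)},
$$
and hence $1+U(z)=1/(1-\hat{f}(z))$.

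Multiplying the two identities, the generating function of the left-hand side is
$$
\hat{f}_{0}(z)\bigl(1+U(z)\bigr)=\frac{1-\hat{f}(z)}{\mu(1-z)}\cdot\frac{1}{1-\hat{f}(z)}=\frac{1}{\mu(1-z)}=\sum_{k\geq 0}\frac{1}{\mu}z^{k},
$$
and extracting the coefficient of $z^{k}$ gives the desired identity. The one technical point to verify is that the manipulations are legitimate: for $|z|<1$ one has $|\hat{f}(z)|\leq\hat{f}(|z|)<1$, so every series above converges absolutely and Tonelli justifies all the interchanges of summation, including the geometric sum defining $U(z)$. I expect this convergence bookkeeping to be the only real obstacle. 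A purely combinatorial alternative would be a strong induction on $k$ using the recursion $u(k)=f(k)+\sum_{t=1}^{k-1}f(t)u(k-t)$ together with Abel-type summation on $f_{0}$, but this is noticeably heavier and less illuminating than the PGF route.
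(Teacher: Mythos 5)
Your proof is correct, but it takes a genuinely different route from the paper. The paper proves the identity combinatorially: it first establishes a convolution lemma, $\sum_{t=1}^{k}f^{(j+1)*}(t)=\sum_{t=1}^{k}f^{j*}(t)F(k-t)$, and then writes $f_{0}(k-t)=[1-F(k-t)]/\mu$ so that the double sum $\sum_{t=1}^{k}[1-F(k-t)]\sum_{j=1}^{t}f^{j*}(t)$ telescopes across $j$ down to $F(k)-\sum_{t=1}^{k}f^{(k+1)*}(t)=F(k)$, whence the left-hand side of~(\ref{flat}) equals $[1-F(k)]/\mu+F(k)/\mu=1/\mu$. You instead encode everything in probability generating functions and read the identity off from $\hat{f}_{0}(z)\bigl(1+U(z)\bigr)=1/[\mu(1-z)]$. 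Your computations check out: the tail-sum identity for $\hat{f}_{0}$, the geometric summation $U(z)=\hat{f}(z)/(1-\hat{f}(z))$ (legitimate since $\hat{f}(|z|)<1$ for $|z|<1$, as $f$ is supported on $\N^{*}$), and the coefficient extraction, which is valid because a power series determines its coefficients; your remark that the identity also holds at $k=0$ is a harmless bonus. The PGF route is shorter and makes transparent \emph{why} the equilibrium delay flattens the renewal function, at the cost of some convergence bookkeeping; the paper's argument is entirely elementary and term-by-term, requiring no analytic apparatus, at the cost of the auxiliary telescoping lemma. Both are complete proofs.
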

\begin{corollaire}\label{equpik}
Let $S_{j}$, $j \in \N$ be a delayed renewal chain with $\E(J_{1})=\mu$ and $\tilde{J}_{0}$ have the distribution of~(\ref{J0}). For all $k\in \N$, if $\pi_{k}$ is the probability that $k$ is in the sample path, then
\begin{equation}\label{pirenew}
\pi_{k}=f_{0}(k)+\sum_{t=1}^{k} f_{0}(k-t)\sum_{j=1}^{t}f^{j*}(t),
\end{equation}
and is equal to $1/\mu$.
\end{corollaire}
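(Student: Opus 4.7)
The corollary is essentially a direct application of Proposition~\ref{fondamental_disc} once we identify the probability that $k$ lies on the sample path of the delayed chain with the expression on the left-hand side of~(\ref{flat}). The plan is therefore to produce the decomposition~(\ref{pirenew}) by a conditioning argument, and then invoke Proposition~\ref{fondamental_disc} to collapse it to $1/\mu$.

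First, I would note that the sample path of the delayed chain is $\{\tilde{J}_{0},\, \tilde{J}_{0}+J_{1},\, \tilde{J}_{0}+J_{1}+J_{2},\dots\}$, and that the event $\{k \in (S_{j})_{j\in\N}\}$ is the disjoint union, over $t\in\{0,1,\dots,k\}$, of $\{\tilde{J}_{0}=k-t\}\cap A_{t}$, where $A_{0}$ is the certain event and, for $t\geq 1$, $A_{t}=\{\sum_{i=1}^{j}J_{i}=t \text{ for some } j\geq 1\}$. The restriction $t\leq k$ uses that $\tilde{J}_{0}$ is valued in $\N$ and the $J_{i}$'s in $\N^{*}$, and the disjointness of the events within $A_{t}$ follows from the $J_{i}$'s being strictly positive, exactly as in the derivation of~(\ref{eqpi}).

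Next, since $\tilde{J}_{0}$ is independent of the $J_{i}$'s, the probability of each piece factorizes. By the definition~(\ref{J0}), the first factor is $f_{0}(k-t)$. The second factor, for $t\geq 1$, is $\Pr(A_{t})=\sum_{j=1}^{t}f^{j*}(t)$, again because the partial sums are strictly increasing so at most $t$ terms can sum to $t$; this is precisely~(\ref{eqpi}) applied to the undelayed chain. Summing the contributions gives
$$
\pi_{k} \;=\; f_{0}(k) \;+\; \sum_{t=1}^{k} f_{0}(k-t)\sum_{j=1}^{t}f^{j*}(t),
$$
which is~(\ref{pirenew}). Finally, Proposition~\ref{fondamental_disc} applied to the common distribution $f(\cdot)$ of the $J_{i}$'s says that this same expression equals $1/\mu$ for every $k\geq 1$. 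The case $k=0$ is handled separately by observing that $\pi_{0}=\Pr(\tilde{J}_{0}=0)=f_{0}(0)=\Pr(J_{1}\geq 1)/\mu=1/\mu$.

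I do not anticipate any real obstacle: the only delicate point is making the decomposition rigorous (ensuring the events are disjoint and that the indices are correctly bounded), which is guaranteed by the positivity of the $J_{i}$'s; the substantive analytic content, namely the equality~(\ref{flat}), is delegated to Proposition~\ref{fondamental_disc}.
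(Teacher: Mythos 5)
Your proof is correct and follows essentially the same route as the paper's: the same disjoint decomposition of the event $\{k \text{ is in the sample path}\}$ according to the value of $\tilde{J}_{0}$, followed by an appeal to Proposition~\ref{fondamental_disc}. If anything you are slightly more careful than the paper, by making the $t=0$ term explicit and by treating the case $k=0$ (which Equation~(\ref{flat}) does not cover) separately.
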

\begin{proof}
The event $\left\{\exists i\in \N\mbox{ such that }S_{i}=k\right\}$ can be decomposed as
$$
\left\{\exists i\in \N\mbox{ such that }S_{i}=k\right\}=\bigcup_{t=0}^{k}\left\{\tilde{J}_{0}=k-t\right\}\bigcap\bigcup_{j=1}^{t}\left\{\sum_{i=1}^{j}J_{i}=t\right\},
$$
where all the events in the union are non-overlapping. It follows that
$$
\pi_{k}=f_{0}(k)+\sum_{t=1}^{k} f_{0}(k-t)\sum_{j=1}^{t}f^{j*}(t)= \frac{1}{\mu},
$$
by Proposition~\ref{fondamental_disc}.
\end{proof}
\begin{definition}\label{forward}
Let $X$ be a random variable with values in $\N$ and finite expectation. A random variable $X_{F}$ is called a forward transform of $X$ if its distribution is given by
$$
\Pr(X_{F}=k) = \frac{\Pr(X\ge k)}{\E(X+1)},\ k\in  \N.
$$
\end{definition}
\begin{remark}\label{remarque}
Moments of $X_{F}$ can be derived from those of $X$ using the property, proven in the Appendix, that if $X$ is a random variable on $\N$ with finite moment of order $m+1$, $\E(X^{m+1})$, $m\geq 0$, then its forward transform $X_{F}$ has a finite moment of order $m$ and
$$
\E(X_{F}^{m})=\frac{\E[F_{m}(X)]}{\E(X+1)},
$$
where $F_{m}(x)$ is the Faulhaber polynomial integer function of degree $m+1$: $F_{m}(x)=\sum_{k=0}^{x}k^{m}$.
\end{remark}
The equilibrium distribution $\tilde{J}_{0}$ is the forward transform of the distribution of $J_{1}-1$, according to Definition~\ref{forward}. Spacing distributions considered in Section~\ref{spreading} are defined as shifted variables $J_{1}=1+X$ where $X$ follows a classical probability distribution on $\N$. The reader can find in Table~\ref{discrete} a collection of distributions that are used in Sections~\ref{spreading} and~\ref{spreading2}, as well as their forward transforms. 

\subsection{Equilibrium renewal chain sampling designs}

By taking the intersection of the sample path of an equilibrium renewal chain with the population  $U=\left\{1,\dots,N\right\}$, one obtains a random sampling design. Corollary~\ref{equpik} ensures that all units of the population have the same inclusion probability. The distribution of the first selected unit index $X_{1}$ satisfies Equation~(\ref{nj0}).
\begin{equation}\label{nj0}
\Pr(X_{1}=k)=\Pr(\tilde{J}_{0}=k)+\Pr(\tilde{J}_{0}=0)\Pr(J_{1}=k)=\frac{\Pr(J_{1}\geq k)}{\E(J_{1})},\ k\in U.
\end{equation}
By definition, the following sampled units are obtained by adding independent variables distributed like $J_{1}$. 
\begin{definition}\label{def3}
An equilibrium renewal chain sampling design is the distribution of a random sample $S$ with
$$
S=\left\{1,\dots,N\right\}\bigcap \left\{\sum_{i=0}^{j}J_{i},\ 0\leq j\leq N-1\right\},
$$
where $J_{1},\dots,J_{N-1}$ are i.i.d random variables in $\N^{*}$ with finite expectation, and $J_{0}$ is an independent variable with distribution given by~(\ref{nj0}), $\Pr(J_{0}=k)=\Pr(J_{1}\geq k)/\E(J_{1})$, $k\in U$.
\end{definition}
For $J_{1}=1+X$, the random variable $J_{0}$ of~(\ref{nj0}) has the same distribution as $1+X_{F}$ where $X_{F}$ is a forward transform of $X$.

The equilibrium renewal chain design that corresponds to the renewal distribution of Example~\ref{example1} is given in Example~\ref{example2}. Its first order inclusion probabilities are equal.
\begin{exam}
\label{example2}
Consider the sequence $J_{i}$, $i\in\N^{*}$ of Example~\ref{example1}, and define $J_{0}$ to be independent of the $J_{i}$'s, with $P(J_{0}=1)=2/3$ and $P(J_{0}=2)=1/3$ according to~(\ref{nj0}). The new inclusion probabilities $\tilde{\pi}_{i}$ of this equilibrium renewal sampling design are related to those of Example~\ref{example1} by:
\begin{eqnarray*}
\tilde{\pi}_{1}&=\Pr(J_{0}=1)\phantom{\pi_{2}+\Pr(J_{0}=2)\pi_{1}} &=2/3,\\
\tilde{\pi}_{2}&=\Pr(J_{0}=1)\pi_{1}+\Pr(J_{0}=2)\phantom{\pi_{1}}&=1/3+1/3=2/3,\\
\tilde{\pi}_{3}&=\Pr(J_{0}=1)\pi_{2}+\Pr(J_{0}=2)\pi_{1}&=1/2+1/6=2/3,\\
&\vdots&
\end{eqnarray*}
\end{exam}

\subsection{Joint inclusion probabilities}

Joint inclusion probabilities of a renewal chain sampling design can be derived from the Probability Mass Function (PMF) $f(.)$ of $J_{1}$. Indeed, the selection of unit $\ell$ given that unit $k$, $0<k<\ell$, is selected can be decomposed according to the number of selected units between $k$ and $\ell$, and this number does not depend on $J_{0}$. We can write that:
$$
\Pr(\ell\in S| k\in S ) =  \sum_{j=1}^{\ell-k} f^{j*}(\ell-k).
$$
The joint inclusion probability of units $k\neq \ell$ is thus given by~(\ref{pikl}).
\begin{equation}\label{pikl}
\pi_{k\ell}= \pi_{k} \sum_{j=1}^{\ell-k} f^{j*}(\ell-k), \ k<\ell .
\end{equation}

\subsection{Bernoulli sampling}\label{secbern}

The Bernoulli sampling design with inclusion probabilities $\pi$ is obtained by selecting or not units into the sample through independent Bernoulli trials with parameter $\pi$ (see for example \citet[][p.43]{til:06}). Its probability distribution is given by
$$
P(s) = \pi^{n}(1-\pi)^{N-n},\ s \subset U,
$$
where $n=\# s$ is the size of sample $s$. The joint inclusion probabilities are equal to $\pi_{k\ell}=\pi^{2}$,  $k\neq \ell$. The usual algorithm used to select a sample according to the Bernoulli sampling design simply consists of generating $N$ independent Bernoulli variables and selecting units according to the observed values.

Bernoulli sampling can also be implemented using Definition~\ref{def1}. Indeed, it is clear that spacings of a Bernoulli sampling design are i.i.d. distributed variables with shifted geometric distributions $J_{i}=1+X_{i}$ where
$
\Pr(X_{i}=k)=(1-\pi)^{k}\pi,\ k\geq 0.
$
Bernoulli sampling is thus a simple renewal chain sampling design satisfying Definition~\ref{def1}. On the other hand it is easy to prove that, if $X_{i}$ follows a geometric distribution, then $X_{i}$ has the same distribution as its forward transform (it is the only distributions on $\N$ that enjoy this property). The random variable $J_{0}$ of Definition~\ref{nj0} has the same distribution as $J_{1}$ in this particular case. Consequently, Bernoulli sampling is also an equilibrium renewal chain sampling design according to Definition~\ref{def3}.

Using~(\ref{pikl}), we find the second order inclusion probabilities $\pi_{k\ell}=\pi^2,$ $k\neq \ell$. Indeed, the sum of $j$ i.i.d. geometric random variables with parameter $\pi$ follows a negative binomial distribution with parameters $j$ and $\pi$. The negative binomial distribution with parameters $j \geq 1$ and $\pi$ in $(0,1)$ is defined by its PMF:
\begin{equation}\label{defnb}
f_{\mathpzc{NB}}(x)=\binom{j+x-1}{x} (1-\pi)^{x}\pi^{j} ,\ x\in \N,
\end{equation}
where $\binom{a}{b}=a!/[b!(a-b)!]$ if $b\le a$ are non-negative integers, and $\binom{a}{b}=0$ if $a$, $b$ or $a-b$ is negative. Considering that
$$
\sum_{i=1}^{j}J_{i}=j+\sum_{i=1}^{j}X_{i},\ j\geq 1,
$$
we have that
$$
f^{j*}(x)=\binom{x-1}{x-j}(1-\pi)^{x-j}\pi^{j}, \ x\geq j.
$$
From~(\ref{pikl}), we get that the joint inclusion probabilities are equal to:
\begin{eqnarray*}
\pi_{k\ell}
&=& \pi \sum_{j=1}^{\ell-k} f^{j*}(\ell-k), \ k<\ell, \\
&=& \pi \sum_{j=1}^{\ell-k} \binom{\ell-k-1}{\ell-k-j}(1-\pi)^{\ell-k-j}\pi^{j},\\
&=& \pi^{2}(\pi+1-\pi) ^{\ell-k-1}=\pi^{2}.
\end{eqnarray*}

\subsection{Systematic sampling}\label{secsyst}

Systematic sampling with rate $1/r$, $r\in \N^{*}$, from a population $U=\{1,\dots,N\}$ is obtained by generating a random start $u$ with a uniform discrete distribution between $1$ and $r$, and selecting units $k$ of $U$ such that $k \equiv u \pmod{r}$ into the sample (see \citet[][]{mad:mad:44}). The first-order inclusion probabilities of this sampling design are given by $\pi_{k}=1/r,$ $k\in U,$ and its joint inclusion probabilities by
$$
\pi_{k\ell}=1/r \mbox{ if } k \equiv \ell \pmod{r} \mbox{ and } 0 \mbox{ otherwise}.
$$
If $N=m r$, with $m,r\in \N^{*}$, the sample size is deterministic and equal to $m$.

Systematic sampling is an equilibrium renewal chain sampling design, agreeing with Definition~\ref{def3} where the $J_{i}$'s, $i\geq 1$ are deterministic and equal to $r$. Indeed, the forward transform $X_{F}$ of $X=r-1$, $r\in \N^{*}$ is such that:
$$
\Pr(X_{F}=k) = \frac{\Pr(X\ge k)}{\E(X+1)}= \frac{\1_{\{r-1 \geq k \}}}{r},\ k\in \N,
$$
and $X_{F}$ follows a uniform distribution on $\{0,\dots,r-1\}$. Hence $J_{0}$ follows a uniform distribution on $\{1,\dots,r\}$. 

The joint inclusion probabilities are obtained from~(\ref{pikl}). Indeed, the sum of $j$ spacings $J_{i}$, $i\geq 1$  is deterministic, equal to $j r$, and
$$
f^{j*}(x)=\1_{\{j r = x\}},\ x\geq 1.
$$
We then have that, for $k<\ell$,
\begin{eqnarray*}
\pi_{k\ell}
&=& \pi \sum_{j=1}^{\ell-k} f^{j*}(\ell-k)=\frac{1}{r} \sum_{j=1}^{\ell-k} \1_{\{j r =\ell-k\}}
= \frac{1}{r} \1_{\{ k \equiv \ell  \pmod{r} \}}.
\end{eqnarray*}
We confirm with this expression that most of the joint inclusion probabilities are null, making it impossible to estimate the variance of Horvitz-Thompson estimators without bias.

\section{Spreading renewal chain sampling designs}\label{spreading}

We have seen in Sections~\ref{secbern} and~\ref{secsyst} two examples of renewal chain sampling designs with very different spreading properties. In Bernoulli sampling, the selection of units are independent, even if they are adjacent in the population list. In systematic sampling, the selection of adjacent units is impossible, provided that the sampling rate is smaller than 1. This translates to the variance of the spacings distribution: it is null for systematic sampling, that has perfect spreading properties, and it is quite large, equal to $(1-\pi)/\pi^{2}$, for Bernoulli sampling.

Using Definition~\ref{def3}, we can build sampling designs with any given spacing distribution on $\N^{*}$. The expectation of this distribution is forced by the sampling rate, which is usually itself decided as a function of cost or precision constraints.  In Section~\ref{qsbern}, we give an application with shifted negative binomial spacings, allowing for a limited control on the variance and spreading properties of the design. As a limiting case, we find the shifted Poisson spacings of Section~\ref{limbern}.
To have a variance that is arbitrarily small, in Section~\ref{qsbin} we use  shifted binomial distributions that have a variance always smaller than their expectation.

These are only examples, and any distribution or family of distributions on $\N^{*}$ that offers sufficient control on its shape can be used. Table~\ref{discrete} in Appendix contains a list of useful discrete probability distributions with their probability mass functions, their supports, means and variances.

\subsection{Negative binomial spacings}\label{qsbern}

The definition of the negative binomial distribution in~(\ref{defnb}) can be extended to parameters $r>0$ and $p$ in $(0,1)$ by considering the PMF:
$$
f_{\mathpzc{NB}(r,p)}(x)=\frac{\Gamma(r+x)}{x!\Gamma(r)} p^{r} (1-p)^{x},\ x\in \N,
$$
where $\Gamma(r) = \int_{0}^{+\infty} t^{r-1}e^{-t}\ dt$, $r>0$ and $\Gamma(k)=(k-1)!$, $k\in \N^{*}$. The expectation of this distribution is $r(1-p)/p$ and its variance is $r(1-p)/p^{2}$.

We consider equilibrium renewal sampling designs with positive spacings $J_{i}$, $i\geq 1$ such that $J_{i}-1$ follows a negative binomial distribution with parameters $r$ and $p$, denoted $\mathpzc{NB}(r,p)$. For a given sampling rate $\pi\in (0,1)$, we find that $\E(J_i)=1/\pi$ implies that
$$
p=\frac{r\pi}{r\pi+ 1-\pi }.
$$
It follows that
\begin{equation}\label{varnb}
\Var(J_i) =\frac{1-\pi}{\pi}+\frac{1}{r}\left(\frac{1-\pi}{\pi}\right)^{2}.
\end{equation}
When $r=1$, we find, as a special case, the Bernoulli sampling design. From~(\ref{varnb}), we deduce that the variance of spacings is smaller than that of Bernoulli sampling when $r>1$ and in that case there is a repulsion between selected units: the sample is spread more evenly on the population than if drawings were independent. On the contrary, if $r<1$, there is an attraction between units and selecting neighbor units together is more likely.

The sum of $j$ independent random variables with negative binomial distribution and parameters $r$, $p$, has a negative binomial distribution with parameters $j r$ and $p$. 
\begin{proposition}\label{pik2nb}
The second order inclusion probabilities of an equilibrium renewal chain sampling design with shifted negative binomial spacings, sampling rate $\pi$ and first parameter $r$ are
$$
\pi_{k\ell}= \pi \sum_{j=1}^{\ell-k} \frac{\Gamma(j r +\ell-k-j)}{(\ell-k-j)! \Gamma(j r )} p^{j r} (1-p)^{\ell-k-j}, \ k< \ell,
$$
where $p=r\pi/(r\pi+ 1-\pi )$.
\end{proposition}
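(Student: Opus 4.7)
The plan is to apply formula~(\ref{pikl}) together with Corollary~\ref{equpik}, and then identify $f^{j*}$ explicitly using the convolution property of the negative binomial distribution.

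First, I would note that, because the design is an equilibrium renewal chain with $\E(J_i)=1/\pi$, Corollary~\ref{equpik} gives $\pi_k=\pi$ for every $k\in U$. Hence, by~(\ref{pikl}),
$$
\pi_{k\ell}=\pi\sum_{j=1}^{\ell-k} f^{j*}(\ell-k),\quad k<\ell,
$$
where $f$ is the PMF of $J_1=1+X$ with $X\sim\mathpzc{NB}(r,p)$. The task therefore reduces to evaluating the $j$-fold convolution $f^{j*}$.

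Next, I would write $\sum_{i=1}^{j} J_i = j+\sum_{i=1}^{j}X_i$ and invoke the standard fact that a sum of $j$ independent $\mathpzc{NB}(r,p)$ variables is again negative binomial with parameters $jr$ and $p$ (this follows directly from the probability generating function $[p/(1-(1-p)z)]^{r}$ raised to the $j$-th power). Consequently, for integer $x\geq j$,
$$
f^{j*}(x)=\Pr\Bigl(\sum_{i=1}^{j}X_i = x-j\Bigr)=\frac{\Gamma(jr+x-j)}{(x-j)!\,\Gamma(jr)}\,p^{jr}(1-p)^{x-j}.
$$
Substituting $x=\ell-k$ and plugging into the previous display yields exactly the claimed expression for $\pi_{k\ell}$.

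Finally, to close the loop I would verify the stated identity $p=r\pi/(r\pi+1-\pi)$. The constraint $\E(J_1)=1/\pi$ and $\E(X)=r(1-p)/p$ give $r(1-p)/p=(1-\pi)/\pi$, and solving for $p$ yields the announced formula (this is the same calibration already used to derive~(\ref{varnb})). There is no real obstacle here: the only substantive ingredient is the closure of the negative binomial family under convolution, which is a textbook fact; everything else is bookkeeping.
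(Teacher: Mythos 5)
Your proposal is correct and follows essentially the same route as the paper's own (very terse) proof: apply Corollary~\ref{equpik} to get $\pi_k=\pi$, note that $\left(\sum_{i=1}^{j}J_i\right)-j\sim\mathpzc{NB}(jr,p)$ by closure of the negative binomial family under convolution, and substitute into~(\ref{pikl}). You simply spell out the intermediate steps (the explicit form of $f^{j*}$ and the calibration of $p$) that the paper leaves implicit.
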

\begin{proof}
Since $J_{i}-1$ has a $\mathpzc{NB}(r,p)$ distribution, $\left(\sum_{i=1}^{j} J_{i}\right)-j$ has a  $\mathpzc{NB}(jr,p)$ distribution.
Using~(\ref{pikl}) one gets the result.
\end{proof}
These joint inclusion probabilities remain positive for any value of $r$. They are plotted in Figure~\ref{G3} for $\pi=1/30$ and different values of $r$.
\begin{figure}[htb!]
\begin{center}
\includegraphics[scale=0.5]{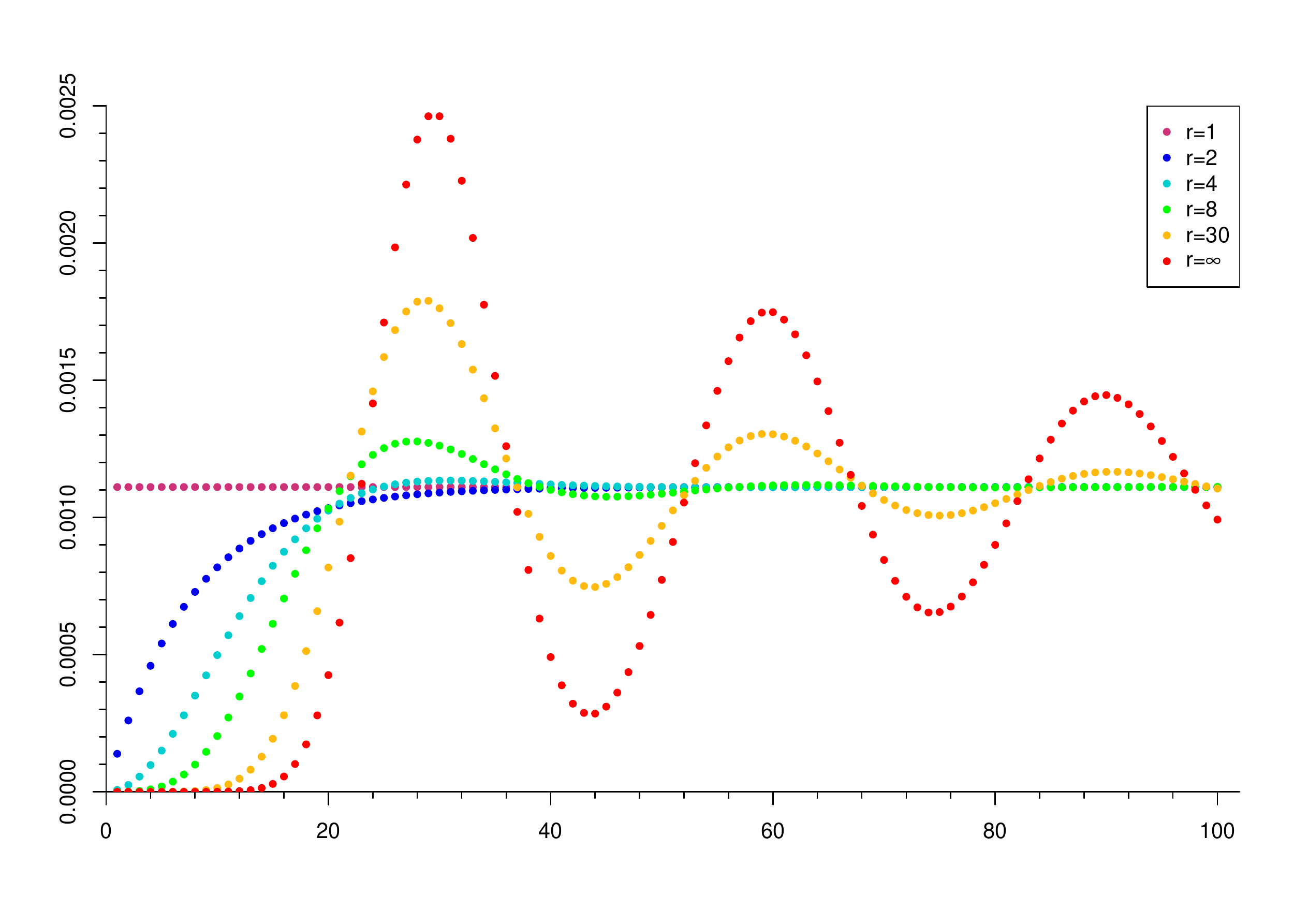}
\caption{\label{G3}$\pi_{k,k+i}$ in function of $i$ for negative binomial spacings with $\pi=1/30$, $r=1,2,4,8,30$ and $r=+\infty$ (Poisson spacings). When $r=1$, we obtain the Bernoulli sampling design and a flat line on the plot. Oscillations are stronger for larger values of $r$.}
\end{center}
\end{figure}

In order to have an equilibrium renewal chain sampling design and equal first order inclusion probabilities, the first sample unit index has to be generated from a shifted forward negative binomial. We get that $J_{0}-1 \sim \mathpzc{ForNB}(r,p)$, where the definition of $\mathpzc{ForNB}(r,p)$ can be found in Table~\ref{discrete}.

\subsection{Poisson spacings}\label{limbern}

The limit of negative binomial distributions when $r$ tends to infinity and $p$ tends to 0 while keeping a constant expectation $\lambda=r(1-p)/p$, is a Poisson distribution $\mathpzc{P}(\lambda)$ with parameter $\lambda$ which is also its expectation and its variance.

We consider the equilibrium renewal chain sampling design with shifted Poisson spacings: $J_{i}-1\sim \mathpzc{P}(\lambda)$, where $\lambda=(1-\pi)/\pi$, $i\in \N^{*}$. The first inter arrival $J_{0}$ is selected using a shifted forward Poisson distribution: $J_{0}-1 \sim \mathpzc{ForP}(\lambda)$, where the definition of distribution $\mathpzc{ForP}(\lambda)$ can be found in Table~\ref{discrete}.

\begin{proposition}\label{pik2poiss}
The second order inclusion probabilities of an equilibrium renewal chain sampling design with shifted Poisson spacings and sampling rate $\pi$ are, with $\lambda=(1-\pi)/\pi$
$$
\pi_{k\ell} =  \pi \sum_{j=1}^{\ell -k}  \frac{e^{-j \lambda} \left(j \lambda\right)^{\ell -k-j}}{(\ell -k-j)!}, \ k< \ell.
$$
\end{proposition}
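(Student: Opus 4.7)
The plan is to apply formula~(\ref{pikl}) directly, reducing the task to computing the $j$-fold convolution of the shifted Poisson PMF.

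First, I would invoke equilibrium: by Corollary~\ref{equpik}, all first-order inclusion probabilities equal $1/\E(J_1) = \pi$, since $\E(J_1) = 1 + \lambda = 1 + (1-\pi)/\pi = 1/\pi$. Therefore $\pi_k = \pi$ for every $k \in U$, and (\ref{pikl}) becomes
\[
\pi_{k\ell} = \pi \sum_{j=1}^{\ell-k} f^{j*}(\ell-k), \quad k < \ell.
\]

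Next I would compute $f^{j*}(\ell-k)$. Writing $J_i = 1 + X_i$ with $X_i \sim \mathpzc{P}(\lambda)$ i.i.d., we have
\[
\sum_{i=1}^{j} J_i = j + \sum_{i=1}^{j} X_i.
\]
By the standard additivity property of independent Poisson variables, $\sum_{i=1}^{j} X_i \sim \mathpzc{P}(j\lambda)$. Hence for $\ell - k \geq j$,
\[
f^{j*}(\ell-k) = \Pr\left(\sum_{i=1}^{j} X_i = \ell-k-j\right) = \frac{e^{-j\lambda}(j\lambda)^{\ell-k-j}}{(\ell-k-j)!},
\]
and $f^{j*}(\ell-k) = 0$ for $j > \ell - k$ since $\sum_{i=1}^{j} J_i \geq j$. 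Substituting this into the displayed expression for $\pi_{k\ell}$ yields exactly the stated formula.

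The argument is essentially mechanical, so there is no real obstacle; the only thing worth double-checking is the indexing and the fact that the shift by $j$ in the argument of the Poisson PMF is correctly handled (i.e., that we evaluate the Poisson $\mathpzc{P}(j\lambda)$ distribution at $\ell-k-j$, not at $\ell-k$). The equilibrium hypothesis is essential, since without it $\pi_k$ would depend on $k$ through the renewal equation~(\ref{eqpi}), and the clean factor of $\pi$ in front of the sum would be lost.
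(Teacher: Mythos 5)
Your proof is correct and follows exactly the route the paper intends: the paper states Proposition~\ref{pik2poiss} without an explicit proof precisely because it is the same argument as for Proposition~\ref{pik2nb}, namely stability of the spacing distribution under convolution (here, $\sum_{i=1}^{j}X_{i}\sim\mathpzc{P}(j\lambda)$) combined with formula~(\ref{pikl}) and the equilibrium property $\pi_{k}=1/\E(J_{1})=\pi$. Your handling of the shift by $j$ and the vanishing of $f^{j*}(\ell-k)$ for $j>\ell-k$ is exactly right.
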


\subsection{Binomial spacings}\label{qsbin}

The variance of spacings in Sections~\ref{qsbern} and~\ref{limbern} are bounded from below, by $(1-\pi)/\pi$. However, to get a sample spread close to that of systematic sampling, we need to be able to have a variance that is arbitrarily close to 0. For this, we consider the equilibrium renewal chain sampling design that is obtained with shifted binomial spacings: $J_{i}-1 \sim \mathpzc{Bin}(r,p)$, $i\in \N^{*}$, $r \in \N$, $p\in [0,1]$. The first spacing $J_{0}$ is selected using a shifted forward binomial distribution: $J_{0}-1 \sim \mathpzc{ForBin}(r,p)$, where the definition of distribution $\mathpzc{ForBin}(r,p)$ can be found in Table~\ref{discrete}.

We find that with a sampling rate equal to $\pi$, $r$ must  necessarily be greater or equal to $(1-\pi)/\pi$, and that
$$
p=\frac{1}{r}\left(\frac{1-\pi}{\pi}\right).
$$
The variance of spacings is then given by~(\ref{varbin}),
\begin{equation}\label{varbin}
\Var(J_{i}) =\frac{1-\pi}{\pi}-\frac{1}{r}\left(\frac{1-\pi}{\pi}\right)^{2}, \ i\in \N^{*}.
\end{equation}

Considering the constraints on $r$ and $p$, this variance is minimal when $r$ is the smallest integer that is greater or equal to $(1-\pi)/\pi$. With this $r$, the variance of spacings is always smaller than $1$, which is really small for an integer valued random variable with a usually very large expectation $1/\pi$. When $1/\pi$ is an integer, the variance of spacings is null when $r=(1-\pi)/\pi$ and $p=1$. The sampling design obtained then is just the systematic sampling design.

If $r$ tends to infinity and $p=(1-\pi)/r\pi$, the binomial distribution with parameters $r$ and $p$ converges in distribution toward the Poisson distribution with parameter $(1-\pi)/\pi$. Hence, the sampling design of Section~\ref{limbern} is also the limiting case of Binomial spacings renewal chain sampling design when $r$ tends to infinity.

\begin{proposition}\label{pik2bin}
The second order inclusion probabilities of an equilibrium renewal chain sampling design with shifted binomial spacings, sampling rate $\pi$ and first parameter $r$ are equal to
$$
\pi_{k\ell}= \pi \sum_{j=1}^{\ell-k} \binom{jr}{\ell-k-j} p^{\ell-k-j} \left(1-p\right)^{j (r+1)-\ell-k}, \ k < \ell,
$$
where $p=(1-\pi)/r\pi$.
\end{proposition}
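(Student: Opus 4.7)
The plan is to follow the same template as Propositions~\ref{pik2nb} and~\ref{pik2poiss}: invoke the general formula~(\ref{pikl}) and then compute the $j$-fold convolution $f^{j*}(\ell-k)$ explicitly for the shifted binomial spacings. Since we are working with an equilibrium renewal chain sampling design, Corollary~\ref{equpik} guarantees $\pi_k=\pi$ for every $k\in U$, so~(\ref{pikl}) reduces to
\[
\pi_{k\ell}=\pi\sum_{j=1}^{\ell-k}f^{j*}(\ell-k),\qquad k<\ell.
\]

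The main step is to identify the law of $\sum_{i=1}^{j}J_i$. By assumption, $J_i=1+X_i$ with $X_i\sim \mathpzc{Bin}(r,p)$ i.i.d., so
\[
\sum_{i=1}^{j}J_i=j+\sum_{i=1}^{j}X_i.
\]
Using the classical convolution identity for binomial laws with the same success probability, $\sum_{i=1}^{j}X_i\sim \mathpzc{Bin}(jr,p)$. Consequently, for any integer $x\ge j$,
\[
f^{j*}(x)=\binom{jr}{x-j}\,p^{\,x-j}(1-p)^{jr-(x-j)}.
\]
Substituting $x=\ell-k$ and simplifying the exponent of $(1-p)$ as $jr-(\ell-k-j)=j(r+1)-(\ell-k)$ gives precisely the summand in the stated formula, and the result follows by summing over $j$ from $1$ to $\ell-k$.

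There is no real obstacle: the whole argument is a direct application of~(\ref{pikl}) together with the convolution stability of the binomial family. The only check worth flagging is the validity of the parametrization, namely that the choice $p=(1-\pi)/(r\pi)$ enforces $\E(J_i)=1+rp=1/\pi$ as required for an equilibrium renewal chain with sampling rate $\pi$, and that the constraint $p\in[0,1]$ imposes $r\ge(1-\pi)/\pi$, as already noted before the statement. Once this is recorded, the proof is essentially a one-line citation of~(\ref{pikl}) combined with the binomial convolution.
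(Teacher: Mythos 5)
Your argument is correct and is exactly the route the paper takes: the paper omits an explicit proof of this proposition, but it is the verbatim analogue of its one-line proof of Proposition~\ref{pik2nb} (convolution stability of the spacing family plus Equation~(\ref{pikl})), which is what you do. One small point worth recording: your exponent $j(r+1)-(\ell-k)$ is the correct one, and it does not literally match the statement's $j(r+1)-\ell-k$, which is evidently a sign typo in the paper rather than a flaw in your derivation.
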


\subsection{Summary}

The different renewal chain sampling designs we considered are listed in Table~\ref{varchain} with the variance of their spacings.
\begin{table}[htb!]
\begin{center}
\caption{Renewal chain sampling designs and variance of their spacings.}\label{varchain}
\begin{tabular}{ll}
\hline
Distribution of $J_{1}-1$                     & $\Var(J_{1})$\\
\hline
 & \\
Negative binomial, $r>0$           & $\frac{1-\pi}{\pi}+\frac{1}{r}\left(\frac{1-\pi}{\pi}\right)^{2} $ \\[3mm]
Poisson                                       & $\frac{1-\pi}{\pi}                                               $ \\[3mm]
Binomial,  $r\geq (1-\pi)/\pi$                                    & $\frac{1-\pi}{\pi}-\frac{1}{r}\left(\frac{1-\pi}{\pi}\right)^{2} $ \\[3mm]
Systematic or binomial with $r=(1-\pi)/\pi$ & $0$                                                                \\
\hline
\end{tabular}
\end{center}
\end{table}
If $1/\pi$ is not an integer, the variance of spacings cannot be null and is at least $(\left\lceil 1/\pi\right\rceil-1/\pi)(1/\pi-\left\lfloor 1/\pi \right\rfloor)$. This lower bound is not reached with shifted binomial spacings but the binomial renewal chain sampling design enjoys the desirable property of having positive joint inclusion probabilities. Other spacing distributions can be used but we retained only common families of distribution that have useful properties such as stability under convolution.

\section{Fixed size sampling designs with exchangeable circular spacings}\label{circular}

Except in very special situations, renewal chain sampling designs do not have fixed sample size. This is due to the independence of spacings. However, in many applications fixed size is required. In this section, we propose to define sampling designs using exchangeable instead of independent spacings. We obtain fixed size designs with equal inclusion probabilities, and we are able to control the sample spread by the choice of the random spacings distribution.

\subsection{Circular spacings}
A sampling design of fixed size $n$ in a population $U=\{1,\dots,N\}$ is entirely specified by the joint distribution of one of the unit indexes, e.g. $X_{1}$, and the ``circular'' spacings $J_{i}=X_{i+1}-X_{i}$, $i=1,\dots,n-1$, and $J_{n}=N+X_{1}-X_{n}$, where $X_{1}$ is the smallest sample unit index and $X_{n}$ the largest. If we represent the population $U$ around a table, as in Figure~\ref{fig:circle}, the $J_i$'s are the difference of units position. Note that considering the population as circular is not new in survey sampling and goes back at least to \citet{ful:70}.

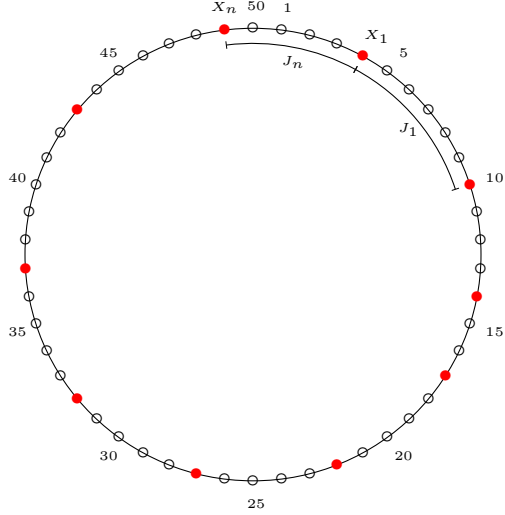
\begin{figure}[htb!]
\begin{center}
\begin{tikzpicture}
%


	\def\N{50}
 	\draw (0:0) circle [radius = 3];
	\foreach \x in {1,2,3,5,6,7,8,9,11,12,13,15,16,18,19,20,21,23,24,25,26,28,29,30,31,33,34,35,36,38,39,40,41,42,44,45,46,47,48,50} {
         \draw ({90-360/ \N *\x}  :3) node {$\circ$} ;
  };
  \foreach \x in {1,5,10,...,\N} {
       \draw ({90-360/ \N *\x} : 3.3) node{\begin{tiny} \x\end{tiny}} ;
  };
  
	\foreach \x in {4,10,14,17,22,27,32,37,43,49} {
       \draw[color = red] ({90-360/ \N *\x}  :3) node {$\bullet$} ;
  };
	\draw ({90-360/ \N *4}    :3.3) node{\begin{tiny} $X_{1}$\end{tiny}};
	\draw ({90-360/ \N *49}    :3.3) node{\begin{tiny} $X_{n}$\end{tiny}};
	
	\draw ({90-360/ \N *49} :2.8) arc ({90-360/ \N *-1}:{90-360/ \N *4} :2.8) ;
	\draw [line width = 0.1, color = black] ({90-360/ \N *49} : 2.75) -- ({90-360/ \N *49} : 2.85);
	\draw [line width = 0.1, color = black] ({90-360/ \N *4} : 2.75) -- ({90-360/ \N *4} : 2.85);
 	\draw ({90-360/ \N *1.5}  :2.6) node{\begin{tiny} $J_{n}$\end{tiny}};

 	\draw ({90-360/ \N *4}  :2.8) arc ({90-360/ \N *4} :{90-360/ \N *10} :2.8) ;
	\draw [line width = 0.1, color = black] ({90-360/ \N *4} : 2.75) -- ({90-360/ \N *4} : 2.85);
	\draw [line width = 0.1, color = black] ({90-360/ \N *10} : 2.75) -- ({90-360/ \N *10} : 2.85);
 	\draw ({90-360/ \N *7}  : 2.6) node{\begin{tiny} $J_{1}$\end{tiny}};

\end{tikzpicture}
\end{center}
\caption{\label{fig:circle} Illustration of a sample of 10 units, in a population of $50$ units. The selected units are in red.}
\end{figure}

We intend to work with Equation~(\ref{samplecirc}) that defines without loss of generality the random sample $S$ of a fixed size sampling design,
\begin{equation}\label{samplecirc}
S=\left\{S_{j} \pmod{N},\ j=1,\dots,n \right\},
\end{equation}
where $J_{1},\dots,J_{n}$ are positive integer random variables that sum to $N$, $J_{0}$ is a random variable in $U$, and
$$
S_{j}=\sum_{i=0}^{j}J_{i}.
$$

\subsection{First order inclusion probabilities}

The first order inclusion probabilities of a sampling design that results from~(\ref{samplecirc}) depend on the joint distribution of the $J_{i}$'s, $i=0,\dots,n$. Intuitively, one sees that, for a given joint distribution of $J_{1},\dots,J_{n}$ with $S_{n}=N$, choosing $J_{0}$ to be independent of the other $J_{i}$'s and uniform on $\{1,\dots,N\}$ allows to obtain equal first order inclusion probabilities. To prove this assertion, one can compute in the general case the inclusion probability of a unit $k$. Consider an independent $J_{0}$, and let 
$$
f_{0}(t)=\Pr(J_{0}=t),\  f_{j}(k)=\Pr(S_{j} \pmod{N}=k),\ t,\ k \in U,\ 1\leq j \leq n.
$$ 
By conditioning on the event $\{J_0=t\}$ and using the law of total probability on the disjoint events $\{k>t\}$, $\{k=t\}$ and $\{k<t\}$, we get that:
$$
\pi_{k}=\sum_{t=1}^{N}f_{0}(t)\left[\1_{t=k}+\1_{t<k}\sum_{j=1}^{k-t}f_{j}(k-t)+\1_{t>k}\sum_{j=1}^{N+k-t}f_{j}(N+k-t)\right],
$$
with the convention that $f_{j}(k)=0$ if $j>n$. Hence we can write that vectors $\pig=(\pi_{1},\dots,\pi_{N})$ and $\fb_{0}=(f_{0}(1),\dots,f_{0}(N))$ are solutions of the linear equation
\begin{equation}\label{lineq}
\pig=\Ab \fb_{0},
\end{equation}
where $\Ab$ is the square matrix of size $N$ with general term
\begin{equation}\label{eqpikcirc}
a_{kt}=\1_{t=k}+\1_{t<k}\sum_{j=1}^{k-t}f_{j}(k-t)+\1_{t>k}\sum_{j=1}^{N+k-t}f_{j}(N+k-t),\ 1\leq k,t \leq N.
\end{equation}
It is not our purpose to solve the system and find designs with any given inclusion probabilities, especially since solutions depend on the $f_{j}(k)$'s, but one arrives rapidly to the result that all the lines of $\Ab$ sum to $n$ (see Proposition~\ref{appcircpik} in Appendix). Hence, if $f_{0}(t)=1/N$ for all $t$, then the inclusion probabilities are all equal to $n/N$.

\subsection{Joint inclusion probabilities}


Let $\ell > k$ in $\{1,\dots,N\}$, and consider $\Pr(\ell\in S| k\in S)$ so that, by definition, $\pi_{k\ell}=\pi_{k}\Pr(\ell\in S| k\in S)$. Knowing that $k\in S$, the event $\ell \in S$ can be decomposed according to the number of units selected between $k$ and $\ell$:
\begin{equation}\label{eqdiese}
\pi_{k\ell}=\pi_{k}\sum_{j=1}^{\ell-k}\Pr\left(\ell \in S \mbox{ and } \# S \cap \{k+1,\dots,\ell\} =j | k\in S \right).
\end{equation}
The term $\Pr\left(\ell \in S \mbox{ and } \# S \cap \{k+1,\dots,\ell\} =j  | k\in S \right)$ is usually difficult to compute: one must decompose according to which $S_{i}$ is equal to $k$. However, if the joint distribution of $J_{1},\dots,J_{n}$ has some additional properties, as in Proposition~\ref{proppiklcirc}, one can obtain a simple expression. 

\begin{proposition}\label{proppiklcirc}
Consider a positive integer random vector $(J_{1},\dots,J_{n})$ that sums to $N$ and such that the distributions of any sum of $k$ successive $J_{i}$'s are equal, this condition also holding for the ``circular'' sums of $J_{n-i}$ up to $J_{n}$ and $J_{1}$ up to $J_{k-i}$ if $i<k$. Then, the second order inclusion probabilities are given by
\begin{equation}\label{piklfs}
\pi_{k\ell}=\pi_{k}\sum_{j=1}^{\ell-k}f_{j}(\ell-k),\ k<\ell.
\end{equation}
\end{proposition}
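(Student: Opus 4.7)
The plan is to start from equation~(\ref{eqdiese}),
$$\pi_{k\ell}=\pi_{k}\sum_{j=1}^{\ell-k}\Pr\bigl(\ell \in S,\ \#S\cap\{k+1,\dots,\ell\}=j\mid k\in S\bigr),$$
and to identify each conditional probability with $f_j(\ell-k)$, where $f_j$ denotes the common distribution of any $j$ consecutive (circular) spacings granted by the hypothesis. I will work under the standing setup of this section: $J_0$ is independent of $(J_1,\dots,J_n)$ and uniform on $U$, so that $\pi_k=n/N$ for every $k$.

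The first step is to decompose $\{k\in S\}$ as the disjoint union of the events $A_i=\{S_i\equiv k\pmod N\}$ for $i=1,\dots,n$. Disjointness is immediate: since $J_1,\dots,J_n$ are positive integers summing to $N$, the $n$ residues $S_1,\dots,S_n$ modulo $N$ are pairwise distinct. On $A_i$, I will argue that the event ``$\ell\in S$ with exactly $j$ sample units in $\{k+1,\dots,\ell\}$'' coincides with $\{J_{i+1}+\dots+J_{i+j}=\ell-k\}$, indices read cyclically modulo $n$. The point is that, since $0<\ell-k<N$ and all spacings are positive, the modular identity $S_{i+j}\equiv\ell\pmod N$ forces the ordinary equality of the cyclic partial sum, and the intermediate positions $S_{i+1},\dots,S_{i+j-1}$ then automatically fall strictly between $k$ and $\ell$.

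The second step exploits the independence and uniformity of $J_0$: for any realization of $(J_1,\dots,J_n)$, the event $A_i$ reduces to the congruence $J_0\equiv k-(J_1+\dots+J_i)\pmod N$, which has probability $1/N$. Consequently
$$\Pr\bigl(A_i\cap\{J_{i+1}+\dots+J_{i+j}=\ell-k\}\bigr)=\frac{1}{N}\Pr(J_{i+1}+\dots+J_{i+j}=\ell-k).$$
The circular stationarity hypothesis collapses the right-hand side to $f_j(\ell-k)/N$, independent of $i$. Summing over $i=1,\dots,n$ yields joint probability $\pi_k\,f_j(\ell-k)$, and the outer sum over $j$ produces the claimed formula.

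The main obstacle I anticipate is the cyclic indexing in the equivalence of the first step: when $i+j>n$, the sum $J_{i+1}+\dots+J_{i+j}$ must be interpreted as wrapping through $J_n$ back to $J_1$, and it is precisely this case that forces the proposition's hypothesis to include the circular sums. Beyond that careful bookkeeping, the argument reduces to two clean disjoint decompositions and the independence of $J_0$.
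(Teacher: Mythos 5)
Your proof is correct and follows the same route as the paper: starting from Equation~(\ref{eqdiese}) and identifying the conditional probability with $f_j(\ell-k)$ via the decomposition over which $S_i$ lands on $k$. The paper merely asserts this identification in one line, whereas you supply the details (disjointness of the events $\{S_i\equiv k \pmod N\}$, the reduction of the cyclic congruence to an exact equality, and the use of the uniform independent $J_0$), which is a welcome elaboration rather than a different argument.
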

\begin{proof}
Indeed, we then have that:
$$
\Pr\left(\ell \in S \mbox{ and } \# S \cap \{k+1,\dots,\ell\} =j | k\in S\right)=f_{j}(\ell-k),\ j=1,\dots,\ell-k,
$$
and the result follows immediately. 
\end{proof}

In the situation of Proposition~\ref{proppiklcirc}, we also get that the conditional inclusion probability $\Pr(\ell\in S| k\in S)$ is a function of $\ell-k \pmod{N}$:
$$
\Pr(\ell\in S| k\in S)=\sum_{j=1}^{\ell-k}f_{j}(\ell-k),\mbox{ if } k<\ell, 
$$
$$
\Pr(\ell\in S| k\in S)=\sum_{j=1}^{N+\ell-k}f_{j}(N+\ell-k),\mbox{ if }  \ell<k. 
$$
It also follows in that case that if the first order inclusion probabilities are all equal, for example when $J_{0}$ has a uniform distribution, then the joint inclusion probabilities $\pi_{k\ell}$ depend only on $\ell - k \pmod{N}$.


Actually, all distributions considered for spacings $J_{1},\dots,J_{n}$ in this paper enjoy a stronger property, they are exchangeable distributions \citep{ald:85,kal:05}.
\begin{definition}\label{exchangeable}
A family $J_{1},\dots,J_{n}$ of random variables is said to be exchangeable if, for all $1\leq k \leq n$ and permutation $\sigma$ of $\{1,\dots,n\}$, the joint distribution of $(J_{\sigma(1)},\dots J_{\sigma(k)})$ is equal to the joint distribution of $(J_{1},\dots,J_{k})$. If the $J_{i}$'s are discrete distributions, this is equivalent to say that $\Pr(J_{1}=a_{1},\dots,J_{n}=a_{n})$ is a symmetric function of $(a_{1},\dots,a_{n})$.
\end{definition}
Exchangeable integer distributions $J_{1},\dots,J_{n}$ that sum to $N$ clearly satisfy the conditions of Proposition~\ref{proppiklcirc}. They are the natural equivalents to the i.i.d. spacing distributions used in Section~\ref{renewal} when the spacings are constrained, by the fixed sample size, to sum to $N$.
\begin{definition}\label{defcircsamp}
Fixed size sampling designs with exchangeable circular spacings and uniform inclusion probabilities are the sampling designs with random samples $S=\left\{S_{j} \pmod{N},\ j=1,\dots,n \right\}$ where $S_{j}=\sum_{i=0}^{j}J_{i}$, $J_{0}$ is a uniform random distribution on $\{1,\dots,N\}$ independent from $\Jb=(J_{1},\dots,J_{n})$, and the $J_{i}$'s, $i=1,\dots,n$, are exchangeable positive integer distributions that sum to $N$.
\end{definition}
The PMF of a fixed size sampling design with exchangeable circular spacings and uniform inclusion probabilities $J_{1},\dots,J_{n}$ is given simply by
\begin{equation}\label{circsamppmf}
P(s)=\frac{n}{N}\Pr(J_{1}=x_{2}-x_{1},\dots,J_{n-1}=x_{n}-x_{n-1},J_{n}=N+x_{1}-x_{n}),
\end{equation}
where $x_{1},\dots,x_{n}$ are the ordered indexes of units sampled in $s$. Indeed, $P(s)$ can be decomposed according to the value of $J_{0}$ into:
\begin{eqnarray*}
\Pr(S=\{x_{1},\dots,x_{n}\})
&=&\Pr(J_{0}=x_{1})\Pr(J_{1}= x_{2}-x_{1},\dots,,J_{n-1}= x_{n}-x_{n-1},J_{n}=N+x_{1}-x_{n})\\
&+& \Pr(J_{0}=x_{2})\Pr(J_{1}= x_{3}-x_{2},\dots,J_{n-1}= N+x_{1}-x_{n},J_{n}=x_{2}-x_{1})\\
&\vdots&\\
&+& \Pr(J_{0}=x_{n})\Pr(J_{1}= N+x_{1}-x_{n},\dots,J_{n}=x_{n}-x_{n-1}),
\end{eqnarray*}
and the $\Pr(J_{0}=x_{i})$ are all equal to $1/N$ while the probabilities involving $J_{1},\dots,J_{n}$ are all equal due to the exchangeability of the circular spacings.
\subsection{Simple Random Sampling}\label{SRSsec}

The Simple Random Sampling (SRS) without replacement design of fixed sample size $n$ is defined by:
$$
P(s) = \binom{N}{n}^{-1} \mbox{ if }\# s =n \mbox{, and } P(s) = 0 \mbox{ otherwise}.
$$
A SRS sample can be selected using the following algorithm (\citet{fan:mul:rez:62}, see also \citet[][p. 46]{til:06}): define a counter $j=0$, then, for $k=1$ to $N$, select unit $k$ with probability $(N-j)/(N-k-1)$ and update $j=j+1$ if $k$ is selected. It is also possible to obtain this design by generating successive jumps according to negative hypergeometric distributions with parameters that depend on the previously selected units (see \citet{vit:84,vit:85,vit:87}).

Proposition~\ref{propsrs} asserts that SRS is a sampling design with exchangeable circular spacings, where the spacings follow a shifted multivariate negative hypergeometric distribution. The (singular) multivariate negative hypergeometric distribution (see for example \citet[][pp. 171-199]{joh:kot:bal:97}) of size $n\geq 1$, with parameters $m\in \N$ and $\rb=(r_1,\dots,r_n)$, $r_{i}>0$, $i=1,\dots,n$ is a probability distribution on integer vectors $(x_{1},\dots,x_{n})$ that sum to $m$. It is denoted here by $\mathpzc{MNH}(m,\rb)$, and has a PMF given by:
\begin{equation}\label{multneghyp}
f_{\mathpzc{MNH}(m,\rb)}(x_{1},\dots,x_{n}) = \frac{m!\Gamma{(R)}}{\Gamma{(m+R)}} \prod_{i=1}^{n}   \frac{\Gamma{(r_{i}+x_{i})}}{\Gamma{(r_{i})}x_{i}!},
\end{equation}
where $R=\sum_{i=1}^{n} r_{i}$.
\begin{proposition}\label{propsrs}
SRS is the sampling design of~(\ref{samplecirc}), where $J_{0}$ has a uniform distribution on $U$, is independent of the $J_{i}$'s, $i\geq 1$, and the integer random vector $\Jb=(J_{1},\dots,J_{n})$ follows a shifted multivariate negative hypergeometric distribution: $\Jb-{\bf 1}_{n} \sim \mathpzc{MNH}(N-n,{\bf 1}_{n})$, where ${\bf 1}_{n}$ is the $n-$vector of ones.
\end{proposition}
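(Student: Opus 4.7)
The plan is to directly compute the probability mass function assigned by the construction to an arbitrary sample $s = \{x_{1} < \cdots < x_{n}\}$ and verify that it equals $\binom{N}{n}^{-1}$, which characterizes SRS. Because $J_{0}$ is uniform on $U$ and independent of $\Jb$, the construction will fall within Definition~\ref{defcircsamp} provided I first check that $\Jb$ is exchangeable. Granted this, formula~(\ref{circsamppmf}) applies and gives
\[
P(s) = \frac{n}{N}\,\Pr(J_{1}=x_{2}-x_{1},\dots,J_{n-1}=x_{n}-x_{n-1},J_{n}=N+x_{1}-x_{n}),
\]
so that it only remains to evaluate the joint probability of $\Jb$ on this tuple.

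The core calculation is to substitute $m=N-n$ and $\rb=\mathbf{1}_{n}$ in the PMF~(\ref{multneghyp}). With $r_{i}=1$ one has $\Gamma(r_{i}+x_{i})/[\Gamma(r_{i})\,x_{i}!]=1$ and $R=n$, so the multivariate negative hypergeometric PMF collapses to the constant
\[
f_{\mathpzc{MNH}(N-n,\mathbf{1}_{n})}(x_{1},\dots,x_{n}) = \frac{(N-n)!\,(n-1)!}{(N-1)!} = \binom{N-1}{n-1}^{-1}
\]
on every nonnegative integer $n$-tuple summing to $N-n$. Thus $\Jb - \mathbf{1}_{n}$ is uniform over such tuples, and consequently $\Jb$ is uniform over the set of positive integer $n$-tuples summing to $N$. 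Since this set is permutation invariant, $\Jb$ is exchangeable, confirming that Definition~\ref{defcircsamp} applies.

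To finish, fix any $s = \{x_{1} < \cdots < x_{n}\}$. The circular differences $(x_{2}-x_{1},\dots,x_{n}-x_{n-1}, N+x_{1}-x_{n})$ are positive integers and telescope to $N$, hence lie in the support of $\Jb$. Plugging the uniform value $\binom{N-1}{n-1}^{-1}$ into (\ref{circsamppmf}) yields
\[
P(s) = \frac{n}{N}\binom{N-1}{n-1}^{-1} = \frac{n\cdot(n-1)!(N-n)!}{N\cdot(N-1)!} = \binom{N}{n}^{-1},
\]
which is the SRS mass function for every size-$n$ sample. Well-definedness of $S$ as a set of exactly $n$ elements follows from $J_{1},\dots,J_{n}>0$ and $\sum_{i=1}^{n}J_{i}=N$, so $S_{1},\dots,S_{n}$ are pairwise distinct modulo $N$.

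No step presents a real obstacle: the proof essentially reduces to the algebraic fact that a multivariate negative hypergeometric distribution with unit shape parameters degenerates to the uniform distribution over compositions. The only care required is the bookkeeping to verify the support conditions and the telescoping identity on the circular differences, so that~(\ref{circsamppmf}) is indeed applicable.
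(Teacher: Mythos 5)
Your proof is correct and follows essentially the same route as the paper's: you specialize the $\mathpzc{MNH}$ PMF at $\rb={\bf 1}_{n}$ to get the constant value $\binom{N-1}{n-1}^{-1}$, note the resulting exchangeability, and apply~(\ref{circsamppmf}) to obtain $P(s)=\binom{N}{n}^{-1}$. The only difference is that you make explicit a few routine checks (support, telescoping of circular differences) that the paper leaves implicit.
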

\begin{proof}
With parameter $N-n$ and ${\bf 1}_{n}$, the PMF given in~(\ref{multneghyp}) reduces to
$$
f_{\mathpzc{MNH}(N-n,{\bf 1}_{n})}(x_{1},\dots,x_{n})=\binom{N-1}{n-1}^{-1},
$$
where $x_{1},\dots,x_{n}$ are non-negative integers that sum to $N-n$. Hence, $\Jb$ has a uniform distribution on the vectors of positive integer numbers that sum to $N$,
$$
\Pr\left[\Jb=(j_{1},\dots,j_{n})\right]=\binom{N-1}{n-1}^{-1},
$$
for all positive integers $(j_{1},\dots,j_{n})$ that sum to $N$. Moreover, this PMF is symmetric in its arguments and the $J_{i}$'s are exchangeable. Applying~(\ref{circsamppmf}), we get that
$$
\Pr(S=\{x_{1},\dots,x_{n}\})=\frac{n}{N} \binom{N-1}{n-1}^{-1}=\binom{N}{n}^{-1},
$$
for all $x_{1}<\dots<x_{n}$.
\end{proof}

The marginal distributions of the circular spacings are shifted negative hypergeometric distributions. Indeed, the marginal distributions of a $\mathpzc{MNH}(m,\rb)$-distributed vector are negative hypergeometric distributions (see for example \citet[][]{jan:pat:72}) with respective parameters $m$, $r_{i}$, $R=\sum_{j}r_{j}$, and PMF
$$
f_{\mathpzc{NH}(m,r_{i},R)}(x)=\frac{m!\Gamma{(R)}}{\Gamma{(m+R)}} \frac{\Gamma{(r_{i}+x)}}{\Gamma{(r_{i})}x!} \frac{\Gamma{(R-r_{i}+m-x)}}{\Gamma{(R-r_{i})}(m-x)!} , \ x \leq m, \ x\in \N.
$$
Their expectation and variance are, respectively,  $mr_{i}/R$ and $m(r_{i}/R)(1-r_{i}/R)(R+m)/(R+1)$. It follows that $J_{k}-1$ has a negative hypergeometric distribution with parameters $N-n$, $1$, and $n$,  $k=1,\dots,n$. In particular we have that
$$
\E(J_{k})=\frac{N-n}{n}+1=\frac{N}{n},
$$
$$
\Var(J_{k})=\frac{N-n}{n}\left(1-\frac{1}{n}\right)\frac{N}{n+1}.
$$

The second order inclusion probabilities can be derived from~(\ref{piklfs}). Indeed, the sum of components of a multivariate negative hypergeometric distribution follows a negative hypergeometric distribution (see \citet[][]{jan:pat:72}). Its parameters are derived from the parameters $m$ and $\rb$ by summing the $r_{i}$'s that correspond to the components that are in the sum. Hence we have that
$$
\sum_{i=1}^{j}J_{i}=j+K_{j},
$$
where $K_{j}$ follows a negative hypergeometric distribution with parameters $N-n$, $j$ and $n$. We can deduce that
\begin{eqnarray*}
f_{j}(\ell-k)&=&\frac{(N-n)!\Gamma{(n)}}{\Gamma{(N)}} \frac{\Gamma{(j+\ell-k-j)}}{\Gamma{(j)}(\ell-k-j)!} \frac{\Gamma{(n-j+N-n-\ell+j+k)}}{\Gamma{(n-j)}(N-n-\ell+j+k)!},\\
&=&\frac{(N-n)!(n-1)!(\ell-k-1)!(N-\ell+k-1)!}{(N-1)!(j-1)!(\ell-k-j)!(n-j-1)!(N-n-\ell+k+j)!},\\
&=& \frac{n-1}{N-1}\binom{N-2}{\ell-k-1}^{-1}\binom{n-2}{j-1}\binom{N-n}{\ell-k-j},
\end{eqnarray*}
and that
$$
\pi_{k\ell}=\frac{n(n-1)}{N(N-1)}\sum_{j=1}^{\ell-k}\binom{N-2}{\ell-k-1}^{-1}\binom{n-2}{j-1}\binom{N-n}{\ell-k-j},\ k<\ell,
$$
via~(\ref{piklfs}).
However, if we rename $u=j-1$, $v=\ell-k-1$, $t=n-2$ and $s=N-2$, this last sum is
$$
\binom{s}{v}^{-1}\sum_{u=0}^{v}\binom{t}{u}\binom{s-t}{v-u},
$$
and Vandermonde's identity ensures that it is equal to 1. Hence we find the well known result:
$$
\pi_{k\ell}=\frac{n(n-1)}{N(N-1)}, \ k\neq \ell.
$$

\subsection{Systematic sampling}

If $N=rn$ with $r\in \N$, the systematic sampling design presented in Section~\ref{secsyst} is a fixed size sampling design with exchangeable circular spacings. It is trivially obtained by taking $J_{0}$ uniform on $U$ and $J_{i}=r$, $i=1,\dots,n$. The joint inclusion probabilities can also easily be derived from~(\ref{piklfs}) using that $f_{j}(\ell-k)=\1_{\{\ell=k+jr\}}$.

\section{Spreading fixed size sampling designs with exchangeable circular spacings}\label{spreading2}

Similar to what we did in Section~\ref{spreading}, we introduce in Sections~\ref{neghyp},~\ref{multin} and~\ref{hyp} new sampling designs with spreading properties by choosing different circular spacings distributions.

Following the structure of Section~\ref{spreading}, we work, in Section~\ref{neghyp}, on sampling designs with multivariate negative hypergeometric spacings and a spreading control parameter $r>0$. When $0<r<1$, there is an attraction between the selected units: if a unit is selected, then its neighbors are more likely to be selected. If $r=1$, the design is SRS and if $r>1$, the sampling is better spread than SRS. As a limit case when $r$ is large, we obtain the multinomial circular spacings design of Section~\ref{multin}. The spacings variance of these sampling designs is bounded from below. Smaller variances and better spreading properties are obtained with multivariate hypergeometric circular spacings in Section~\ref{hyp}, furthering the parallel with binomial spacings of Section~\ref{qsbin}.

\subsection{Multivariate negative hypergeometric circular spacings}\label{neghyp}

The multivariate negative hypergeometric distribution $\mathpzc{MNH}(m,\rb)$ has exchangeable marginals exactly when $r_{1}=\dots=r_{n}$, $\rb=r{\bf 1}_{n}$ for some positive real number $r$. If $\Jb-{\bf 1}_{n} \sim \mathpzc{MNH}(N-n,r{\bf 1}_{n})$, the sampling design of Definition~\ref{defcircsamp} has circular spacings with a variance given by:
$$
\Var(J_{k})=\frac{N-n}{n}\left(1-\frac{1}{n}\right)\frac{rn+N-n}{rn+1}, \ k=1,\dots,n.
$$
These variances are decreasing functions of $r$, with SRS corresponding to $r=1$.

According to~(\ref{circsamppmf}), the sampling design PMF is given by:
$$
P(s)=\frac{n}{N}\frac{\Gamma(nr)}{\left[\Gamma(r)\right]^{n}}\frac{(N-n)!}{ \Gamma[N+n(r-1)]}\frac{\Gamma(r+N+x_{1}-x_{n}-1)}{(N+x_{1}-x_{n}-1)!}\prod_{i=1}^{n-1}\frac{\Gamma(r+x_{i+1}-x_{i}-1)}{(x_{i+1}-x_{i}-1)!},
$$
where $x_{1},\dots,x_{n}$ are the ordered indexes of units sampled in $s$. The second order inclusion probabilities are obtained from~(\ref{piklfs}):
$$
\pi_{k\ell}=\frac{n}{N}\sum_{j=1}^{\ell-k}\binom{N-n}{\ell-k-j}\frac{\B[\ell-k+j(r-1),N+n(r-1)-\ell+k-j(r-1)]}{\B(jr,nr-jr)},\ k<\ell,
$$
where $\B(\cdot,\cdot)$ denotes the beta function, defined by:
$$
\B(a,b)=\frac{\Gamma(a)\Gamma(b)}{\Gamma(a +b)}=\int_{0}^{1} t^{a-1} (1-t)^{b-1}dt,
$$
if $a$ and $b$ are positive real numbers.


These joint inclusion probabilities are plotted in Figure~\ref{G4} for different values of $r$, including their limit when $r\rightarrow\infty$. On this plot, we see strong oscillations of the joint inclusion probabilities when $r$ is large.
\begin{figure}[htb!]
\begin{center}
\includegraphics[scale=0.5]{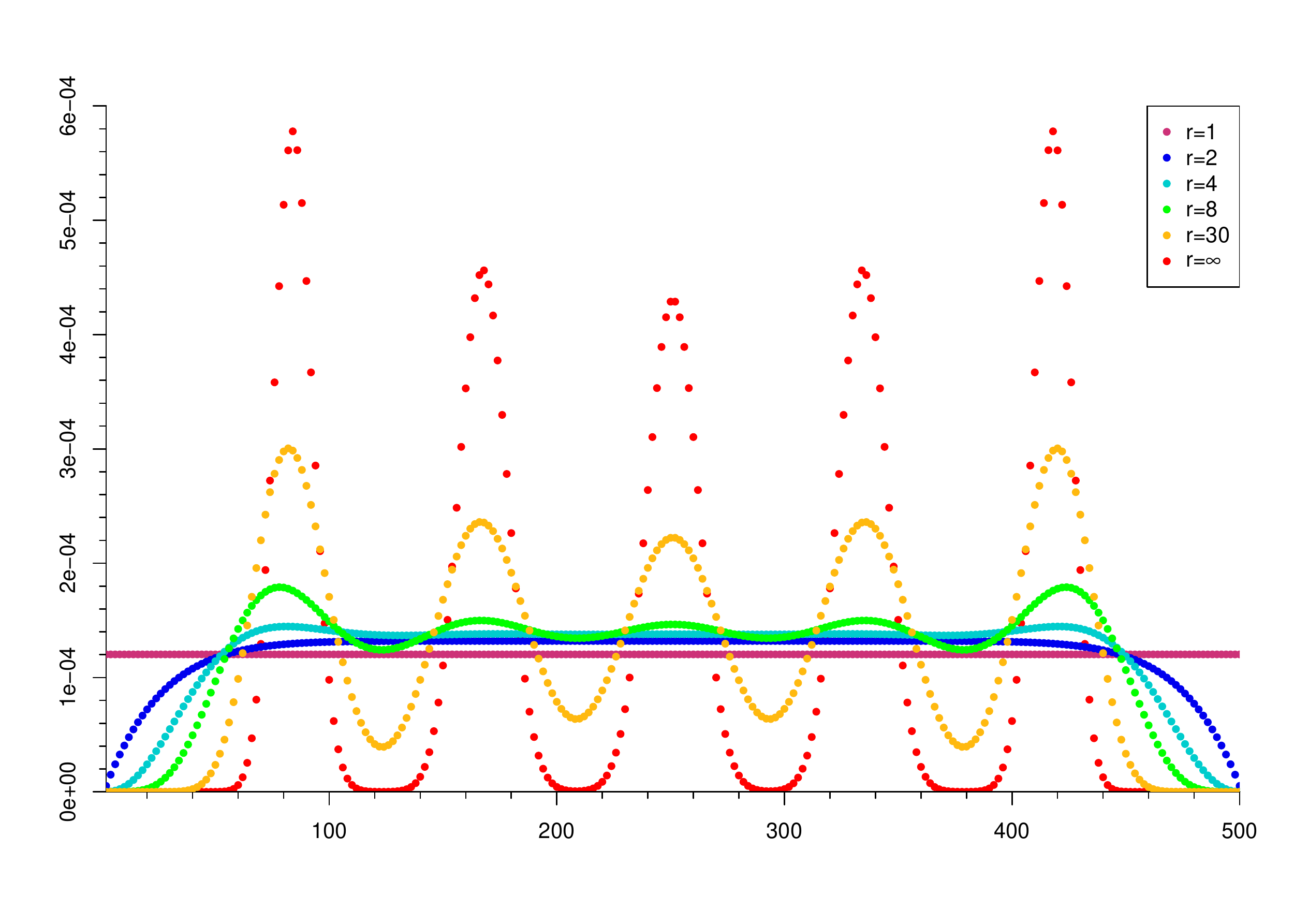}
\caption{\label{G4}$\pi_{1,1+i}$ in function of $i$ for a fixed size sampling design with shifted multivariate negative hypergeometric circular spacings, $n=6$, $N=500$ and $r=1,\ 2,\ 4,\ 8$, $30$ and $r=+\infty$ (multinomial circular spacings).
When $r=1$, we obtain the SRS design and constant joint inclusion probabilities. The larger $r$ is, the more contrasted are the joint inclusion probabilities.}
\end{center}
\end{figure}

\subsection{Multinomial circular spacings}\label{multin}

Let $\mathpzc{MNom}(m,\pb)$ be the multinomial distribution with parameters $m\in \N$ and $\pb=(p_{1},\dots,p_{n})\in [0,1]^{n}$, $\sum_{i}p_{i}=1$. It is the probability distribution on integer vectors $(x_{1},\dots, x_{n})$ such that $\sum_{i} x_{i}=m$ with PMF:
\begin{equation}\label{multom}
f_{\mathpzc{MNom}(m,\pb)}(x_{1},\dots,x_{n}) = \binom{m}{x_{1} \cdots x_{n}} \prod_{i=1}^{n} p_{i}^{x_{i}},
\end{equation}
where
$$
\binom{m}{x_{1} \cdots x_{n}}=\frac{m!}{x_{1}! \cdots x_{n}!}.
$$
Its marginal distributions are binomial with respective parameters $m$ and $p_{i}$. They are exchangeable exactly when $\pb=n^{-1}{\bf 1}_{n}$

When $r$ tends to infinity, the multivariate negative hypergeometric distribution with parameters $m$ and $r{\bf 1}_{n}$ tends to a multinomial distribution with parameters $m$ and ${\bf 1}/n$ (see, for instance, \citet[][p. 182]{terrell1999mathematical}). Thus the fixed size sampling design with shifted multinomial exchangeable circular spacings is the limit case of multivariate negative hypergeometric spacings of Section~\ref{neghyp} when $r$ tends to infinity.

Spacings $J_{k}$ follow a shifted binomial distribution with parameters $N-n$ and $1/n$.
The corresponding sampling design PMF is given by
$$
P(s) = \frac{n}{N}\frac{1}{n^{N-n}}\frac{(N-n)!}{(N+x_{1}-x_{n}-1)! \prod_{j=2}^{n} (x_{j}-x_{j-1}-1)! },
$$
where $x_{1},\dots,x_{n}$ are the ordered indexes of units sampled in $s$. The sum of any $j$ components of a $\mathpzc{MNom}(m,\pb)$ multinomial vector follows a binomial distribution with parameters $m$ and $p$ where $p$ is the sum of corresponding $p_{i}$'s. Hence we have here that
$$
f_{i}(\ell-k)=\binom{N-n}{\ell-k-j}\left(\frac{j}{n}\right)^{\ell-k-j}\left(1-\frac{j}{n}\right)^{N-n-\ell+k+j},
$$
$$
\pi_{k\ell}
=
\frac{n}{N} \sum_{j=1}^{\ell-k}
\binom{N-n}{\ell-k-j}\left(\frac{j}{n}\right)^{\ell-k-j}\left(1-\frac{j}{n}\right)^{N-n-\ell+k+j},\ k<\ell.
$$

\subsection{Multivariate hypergeometric circular spacings}\label{hyp}

Variances of circular spacings in Sections~\ref{neghyp} and~\ref{multin} are bounded from below. In order to have smaller variances, one can use shifted multivariate hypergeometric spacings.

Let $\mathpzc{MH}(m,\rb)$ be the (singular) multivariate hypergeometric distribution with parameters $m \in \N$ and $\rb=(r_{1},\dots,r_{n})$, $\rb$ is an integer vector that sums to $R$ and $m\leq R$. It is a probability distribution on integer vectors $(x_{1},\dots,x_{n})$ that sum to $m$, $x_{i}\leq r_{i}$ and has a  PMF  given by:
\begin{equation}\label{multghyp}
f_{\mathpzc{MH}(m,\rb)}(x_{1},\dots,x_{n}) = \binom{R}{m}^{-1}\prod_{i=1}^{n} \binom{r_{i}}{x_{i}}.
\end{equation}
The marginal distributions of $\mathpzc{MH}(m,\rb)$ are hypergeometric variables with respective parameters $m$, $r_{i}$ and $R$, and their variance is $m(r_{i}/R)(1-r_{i}/R)(R-m)/(R-1)$.

The multivariate hypergeometric distribution has exchangeable marginals exactly when $\rb=r{\bf 1}_{n}$ for some integer $r$ larger than $m/n$. Here, we consider the fixed size sampling design with circular spacings $\Jb$ such that $\Jb-{\bf 1}_{n}\sim \mathpzc{MH}(N-n,r{\bf 1}_{n})$ with $r\geq N/n-1$. We get that the spacing variances are given by
$$
\Var(J_{k})= \frac{N-n}{n}\left(1-\frac{1}{n}\right)\frac{rn-N+n}{rn-1}.
$$
The parameter $r$ can be used to tune the variance. If $r= (N-n)/n$ is integer, we have $\Var(J_{k})=0$ and we obtain the systematic sampling design.

The sampling design PMF is obtained via~(\ref{circsamppmf}):
$$
P(s) = \frac{n}{N}\binom{rn}{N-n}^{-1}\binom{r}{x_{i+1}-x_{i}-1}\prod_{i=1}^{n-1} \binom{r}{N+x_{1}-x_{n}-1},
$$
where $x_{1},\dots,x_{n}$ are the ordered indexes of units sampled in $s$. The sum of components of a $\mathpzc{MH}(m,\rb)$ vector follows a hypergeometric distribution. In the present case we find that
$$
f_{j}(x)=\binom{rn}{N-n}^{-1}\binom{jr}{x-j}\binom{rn-jr}{N-n-x+j},\ j\leq x\leq N-n,
$$
and that the joint inclusion probabilities are
$$
\pi_{k\ell}
=
\frac{n}{N}\binom{rn}{N-n}^{-1} \sum_{j=1}^{\ell-k}
\binom{jr}{\ell-k-j}\binom{rn-jr}{N-n-\ell+k+j}, \ k <\ell.
$$
Note that some joint inclusion probabilities may be null, even when $r>(N-n)/n$ and the design is not the systematic sampling design. For example, if $N=10$, $n=2$, $r=5$ and $\ell=k+1$, then
$$
\pi_{k\ell}=\frac{1}{5}\binom{10}{8}^{-1}
\binom{5}{0}\binom{5}{8}=0.
$$

\subsection{Summary}

The different fixed size sampling designs with exchangeable circular spacings that we considered are listed in Table~\ref{varfixed} with the variance of their spacings.
\begin{table}[htb!]
\begin{center}
\caption{Fixed size sampling designs and variance of their spacings.}\label{varfixed}
\begin{tabular}{ll}
\hline
Distribution of   $\Jb -{\bf 1}_{n}$ &  $\Var(J_{i}),\ i=1,\dots,n$\\
\hline
 & \\
Multivariate negative hypergeometric, $r>0$ & $\frac{N-n}{n}\left(1-\frac{1}{n}\right)\frac{rn+N-n}{rn+1}$ \\[3mm]
Multinomial                                & $\frac{N-n}{n}\left(1-\frac{1}{n}\right)$                         \\[3mm]
Multivariate hypergeometric, $r\geq N/n-1$  & $\frac{N-n}{n}\left(1-\frac{1}{n}\right)\frac{rn-N+n}{rn-1}$ \\[3mm]
Systematic or hypergeometric with $r=N/n-1$ & $0$ \\
\hline
\end{tabular}
\end{center}
\end{table}
Other exchangeable circular spacings may be used, and are easily obtained as distributions of vectors of i.i.d. random variables conditioned on the sum of the vectors components. The families of distribution of Section~\ref{spreading2} encompass the SRS and fixed-size systematic designs. They allow one to use designs with low spacings variance, but it is not always possible to avoid having null joint inclusion probabilities with shifted multivariate hypergeometric spacings. Finally, the designs are not strictly sequential as the population list may need to be run over twice in order to finish selecting a sample.

\section{Simulations}\label{simulations}
A single artificial population of $N=200$ units was generated with an interest variable $Y$ that had a trend and are autocorrelated:
$
y_{k} = k + z_{k},
$
where $z_{k}=0.6 z_{k-1}+\varepsilon_{k}$
and $\epsilon_{k} \sim N(0,\sigma_{\varepsilon} = 0.3)$.
With this kind of autocorrelation, having well spread samples ought to be an efficient strategy. The ``spacing'' $N+x_{1}-x_{n}$ between the last sampled unit and the first one is treated like any other spacing, so that ideally one would also want to have some similarity between units at the beginning of the population list and those at the end. This feature can easily be obtained in a setting of continuous population sampling (see \citet[][]{wilh:qual:till:2017}), but is not common in finite population applications.

For each situation, a set of 100,000 samples was generated. All samples were of fixed size $n=50$ and were selected using the following sampling designs:
\begin{itemize}
\item  Multivariate negative hypergeometric (MNH) with $r=0.5$, $r=1$ (SRS), $r=5$, $r=10$, $r=50$,
\item  Multinomial  (MULT),
\item  Multivariate hypergeometric (MH) with $r=50$, $r=10$, $r=6$, $r=4$.
\end{itemize}
We used different values for the tunings parameter $r$ in all kinds of sampling design in order to show the effect of this tuning parameter.  

For each sample an estimate $\widehat{\ybar}$ of the mean and of the variance $\widehat{\Var}_{SYG}\left(\widehat{\ybar}\right)$ (with the Sen-Yates-Grundy formula) were produced. Compiling our simulation results, we computed the following values, presented in Table~\ref{tabsimulations}:
\begin{itemize}
\item the Bias Ratio
$$
{\rm BR} = 100\frac{\E_{sim}\left(\widehat{\ybar} - \ybar\right)}{\left[\Var_{sim}\left(\widehat{\ybar}\right)\right]^{\frac{1}{2}}},
$$
where $\E_{sim}(\cdot)$ and $\Var_{sim}(\cdot)$ denote the empirical means and variances of the simulation results.
\item the standard error
$$
{\rm SE} = \left[\Var_{sim}\left(\widehat{\ybar}\right)\right]^{\frac{1}{2}};
$$
\item
the square root of the variance estimator average
$$
{\rm REVAR} = \left\{\E_{sim}\left[ \widehat{\Var}_{SYG}\left(\widehat{\ybar}\right)\right]\right\}^{\frac{1}{2}};
$$
\item the coefficient of variation of the variance estimator
$$
{\rm CV}
= \frac{\left\{\Var_{sim}\left[ \widehat{\Var}_{SYG}\left(\widehat{\ybar}\right)\right]\right\}^{\frac{1}{2}} }{ {\Var}_{sim}\left(\widehat{\ybar}\right)};
$$
\item and the coverage rate of the $95\%$ confidence interval.
\end{itemize}

The simulation results in Table~\ref{tabsimulations} confirm, with column BR, that the estimator of the mean is unbiased. The accuracy of the mean estimator improves as the circular spacings variance decreases, from Design {\rm MNH} $r=0.5$ to Design {\rm NH} $r=4$.

The conclusions are different for the variance estimator. For all situations in our simulations, the joint inclusion probabilities are positive. The variance estimator is unbiased, and this is confirmed by the fact that columns SE and REVAR are mostly equal. However, when the variance of the spacings are close to 0, the variance estimator is unstable. Indeed, with these parameters, some joint inclusion probabilities are very small (less than $1/1000$) compared to others (on average $0.0625$). In column CV that the accuracy of the variance estimator improves at first as the circular spacings variance decreases, from Design {\rm MNH} $r=0.5$ to Design {\rm MNH} $r=5$, and then the coefficient of variation goes up again from Design {\rm MNH} $r=0.5$ to Design {\rm MH} $r=4$. The coverage rate deviates strongly from its nominal value of $95\%$ in the last couple of designs.
\begin{table}[htb!]
\centering
\caption{\label{tabsimulations} Results of the 100,000 simulations. The designs are ordered in decreasing order of the variance of the spacings.}
\vspace*{0.2cm}
\begin{tabular}{lrrrrrr}
  \hline
 & BR & SE & REVAR & CV & coverage \\
  \hline
  {\rm MNH} $r=0.5$ & -0.25 & 0.46 & 0.45 & 0.48 & 93.97 \\ 
  {\rm SRS} & -0.12 & 0.35 & 0.35 & 0.23 & 94.52 \\ 
  {\rm MNH} $r=5$ & 0.08 & 0.23 & 0.23 & 0.21 & 94.39 \\ 
  {\rm MNH} $r=10$ & -0.22 & 0.21 & 0.21 & 0.26 & 94.08 \\ 
  {\rm MNH} $r=50$ & 0.13 & 0.19 & 0.19 & 0.33 & 93.90 \\ 
  {\rm MULT} & 0.36 & 0.19 & 0.19 & 0.35 & 93.64 \\ 
  {\rm MH} $r=50$ & -0.17 & 0.18 & 0.18 & 0.37 & 93.58 \\ 
  {\rm MH} $r=10$ & -0.35 & 0.16 & 0.16 & 0.52 & 92.05 \\ 
  {\rm MH} $r=6$  & -0.74 & 0.14 & 0.14 & 0.72 & 83.97 \\ 
  {\rm MH} $r=4$ & -0.52 & 0.11 & 0.15 & 1.60 & 40.55 \\ 
   \hline
\end{tabular}
\end{table}
Thus the design that performs best for the point estimation of the mean does not allow to properly estimate the precision, and even gives seriously misleading confidence intervals. The same kind of problem arises when a stratified sampling design is used with too many strata.

An arbitration needs to be made between the accuracy of the point estimator and that of its variance estimator. In our simulations, a reasonable solution consists in choosing the sampling design with shifted multinomial distribution (MULT). This method is simple to implement, more so than the MNH or MH. It allows for accurate point estimation while presenting a correct coverage rate of its confidence intervals.

\section{Conclusions}\label{conclusion}

In Sections~\ref{renewal} and~\ref{circular}, we proposed general methods to generate uniform inclusion probabilities sampling designs with i.i.d. or exchangeable spacings. We used them in Sections~\ref{spreading} and~\ref{spreading2} to obtain sample selection methods with controlled spreading properties and gave, in Section~\ref{simulations}, an example where such methods are useful. If the response variable is similar among units that are close in the population list, the choice of the spreading parameter allows one to make a trade-off between precision of the point estimator and precision of variance estimator. 

Some of the designs that we consider  have concentrated spacings, but, unlike systematic sampling, they retain positive joint inclusion probabilities and thus allow for an unbiased estimation of variance. These joint inclusion probabilities have computable closed-form expressions and depend only on the ``distance'' between units in the population list, thus at most $N-1$ joint inclusion probabilities need to be computed. However, the ranks of sampled units in the population must be known in order to compute a variance estimator.

We do not have a clear solution to extending these results in all generality to unequal first order inclusion probabilities sampling designs. One partial solution is to work on the distribution of $J_{0}$. The choice of a different distribution for $J_{0}$ allows one to have a limited control on the inclusion probabilities via Equations~\ref{pirenew} and~\ref{lineq}, while leaving the spacings untouched. Another solution is the thinning approach. It consists in selecting a large enough first phase sample with a spreading design and uniform inclusion probabilities and selecting a second phase sub-sample with appropriate inclusion probabilities. However, this does not preserve the spreading properties.

\section*{Acknowledgments}
This work was supported in part by the Swiss Federal Statistical Office. The views expressed in this paper are solely those of the authors.~M. W. was partially supported by a Doc.Mobility fellowship of the Swiss National Science Foundation (grant no. P1NEP2\_162031). The authors would like to thank three referees and an associate editor for their constructive comments that helped us improve this paper, and Prof. Lennart Bondesson who has kindly sent us copies of his work on a similar topic.


\section*{Appendix A: Proof of Proposition~\ref{fondamental_disc} and Remark~\ref{remarque}}

\begin{lemma}\label{lemme1}
If $f(\cdot)$ is a probability distribution on $\{1,2,\dots,\}$ with cumulative distribution function $F(\cdot)$, and $k,j\geq 1$, then
$$
\sum_{t=1}^k f^{(j+1)*}(t)=\sum_{t=1}^k  f^{j*}(t) F(k-t).
$$
\end{lemma}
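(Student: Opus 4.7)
The plan is to unfold the $(j+1)$-fold convolution as a one-step convolution of $f^{j*}$ with $f$, then swap the order of the resulting double sum to recognize the cumulative distribution function $F$ on the inner sum. Since $f$ is supported on $\N^{*}$, all summation bounds are easy to track and no signed cancellations occur.

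More precisely, I would start by writing, for every $t\geq 1$,
$$
f^{(j+1)*}(t)=\sum_{s=1}^{t-1} f^{j*}(s)\,f(t-s),
$$
where the lower bound $s\geq 1$ and the upper bound $s\leq t-1$ both come from the fact that $f$ and $f^{j*}$ vanish at $0$. Summing this identity over $t=1,\dots,k$ (noting that for $t=1$ the inner sum is empty, which is consistent) gives a double sum over the triangle $\{(s,t):1\leq s\leq t-1,\;1\leq t\leq k\}$.

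Next I would swap the order of summation, rewriting the triangle as $\{(s,t):1\leq s\leq k-1,\;s+1\leq t\leq k\}$:
$$
\sum_{t=1}^{k}f^{(j+1)*}(t)=\sum_{s=1}^{k-1}f^{j*}(s)\sum_{t=s+1}^{k}f(t-s).
$$
Substituting $u=t-s$ in the inner sum turns it into $\sum_{u=1}^{k-s}f(u)=F(k-s)$. Finally I would observe that the $s=k$ term can be adjoined at no cost because $F(0)=0$ (again using that $f$ is supported on $\N^{*}$), which yields
$$
\sum_{t=1}^{k}f^{(j+1)*}(t)=\sum_{s=1}^{k}f^{j*}(s)\,F(k-s),
$$
i.e.\ the claim after relabeling $s$ as $t$.

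The only delicate point is bookkeeping on the summation limits; there is no genuine obstacle beyond making sure the triangle is re-indexed correctly and that the boundary terms ($t=1$, $s=k$) contribute zero. No analytic assumption other than $f$ being a PMF on $\N^{*}$ is needed.
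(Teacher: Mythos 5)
Your proof is correct and follows essentially the same route as the paper's: unfold the $(j+1)$-fold convolution as a one-step convolution with $f$, interchange the order of summation over the triangle, and identify the inner sum as $F(k-s)$, using $F(0)=0$ to adjust the boundary. The paper handles the re-indexing with indicator functions rather than explicit summation limits, but the argument is the same.
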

\begin{proof}
Indeed, if $\1_{A}$ is the indicator function of set $A$,
\begin{eqnarray*}
\sum_{t=1}^k f^{(j+1)*}(t) &=& \sum_{t=1}^k \sum_{u=1}^t f^{j*}(u) f(t-u),\\
                           &=& \sum_{t}\sum_{u} f^{j*}(u) f(t-u) \1_{\{1\leq u\leq t\}}\  \1_{\{1\leq t\leq k\}},\\
                           &=& \sum_{u}\sum_{t}  f^{j*}(u) f(t-u) \1_{\{1\leq u\leq k\}}\  \1_{\{1\leq u\leq t\}}\  \1_{\{1\leq t\leq k\}},\\
                           &=& \sum_{u}  f^{j*}(u) \1_{\{1\leq u\leq k\}} \left[ \sum_{t} f(t-u) \ \1_{\{1\leq u\leq t\}} \1_{\{1\leq t\leq k\}}\right], \\
                           &=& \sum_{u=1}^{k}  f^{j*}(u)\left[F(k-u) - \underbrace{F(0)}_{=0}\right],\\
													 &=& \sum_{t=1}^k  f^{j*}(t) F(k-t).
\end{eqnarray*}
\end{proof}

\begin{proof}[Proof of Proposition~\ref{fondamental_disc}]
$f_{0}(\cdot)$ is a well-defined non-negative function on $\mathbb{N}$. It is sufficient to prove that $\sum_{k\geq 0} f(\{k+1,\dots\}) = \mu$, but
\begin{eqnarray*}
\sum_{k\geq 0} f(\{k+1,\dots\}) &=& \sum_{k\geq 0} \sum_{j\geq k+1} f(j),\\
                                &=& \sum_{j\geq 0} \sum_{k\geq 0} f(j) \1_{k+1 \leq j},\\
																&=& \sum_{j\geq 0}  j \cdot f(j),\\
																&=& \mu.
\end{eqnarray*}
As $f_{0}(k-t)=\left[1-F(k-t)\right]/\mu$, to prove~(\ref{flat}), it is sufficient to note that
\begin{eqnarray*}
\sum_{t=1}^{k} \left[1-F(k-t)\right] \sum_{j=1}^{t} f^{j*}(t) &=& \sum_{t}\sum_{j}\left[1-F(k-t)\right]f^{j*}(t)\1_{1\leq t \leq k}\1_{1\leq j \leq t}, \\
                                                   &=& \sum_{j}\sum_{t}\left[1-F(k-t)\right]f^{j*}(t)\1_{1\leq t \leq k}\1_{1\leq j \leq t}, \\
                                                   &=& \sum_{j} \left[ \sum_{t}f^{j*}(t)\1_{1\leq t \leq k}\1_{1\leq j \leq t} - \sum_{t}F(k-t)f^{j*}(t)\1_{1\leq t \leq k}\1_{1\leq j \leq t}\right],\\
\end{eqnarray*}
\begin{eqnarray*}																\phantom{\sum_{t=1}^{k} \left[1-F(k-t)\right] \sum_{j=1}^{t} f^{j*}(t)} 				 &=& \sum_{j=1}^{k} \left[  \sum_{t=j}^{k}f^{j*}(t) -  \sum_{t=j}^{k} F(k-t)f^{j*}(t)\right],\\
																									 &=& \sum_{j=1}^{k} \left[ \sum_{t=1}^{k}f^{j*}(t) -  \sum_{t=1}^{k} F(k-t)f^{j*}(t)\right]\mbox{ (indeed, $f^{j*}(t)=0$ if $t<j$)},\\
																									 &=& \sum_{t=1}^{k}f^{1*}(t) - \sum_{t=1}^{k} F(k-t)f^{k*}(t) \mbox{ via lemma~\ref{lemme1}},\\
																									 &=& F(k) - \sum_{t=1}^{k} f^{(k+1)*}(t) = F(k),
\end{eqnarray*}
since $f^{(k+1)*}(t)=0$ if $t\leq k$, and the result follows immediately.
\end{proof}

\begin{proof}[Proof of Remark~\ref{remarque}]
Consider $X$ a random variable on $\N$ with finite moment of order $m+1$, $\E(X^{m+1})$, $m\geq 0$, and its forward transform $X_{F}$ according to Definition~\ref{forward}. Then we can write: 
\begin{eqnarray*}
\sum_{k\geq 0}k^{m}\Pr(X_{F}=k) & = & \sum_{k\geq 0}k^{m}\frac{\Pr(X \geq k)}{\E(X+1)} = \sum_{k\geq 0}\sum_{i\geq k}\frac{k^{m} \Pr(X=i)}{\E(X+1)},\\
   &=& \frac{1}{\E(X+1)}\sum_{i\geq 0}\sum_{k \geq 0}\1_{k\leq i} k^{m}\Pr(X=i) = \frac{1}{\E(X+1)}\sum_{i\geq 0} \left(\sum_{k=0}^{i}k^{m}\right)\Pr(X=i),\\
	 &=& \frac{\E[F_{m}(X)]}{\E(X+1)},
\end{eqnarray*} 
where  $F_{m}(x)=\sum_{k=0}^{x}k^{m}$.
\end{proof}

\section*{Appendix B}

\begin{proposition}\label{appcircpik}
The lines of matrix $\Ab$ with general term $a_{kt}$ given at~(\ref{eqpikcirc}) all sum to $n$.
\end{proposition}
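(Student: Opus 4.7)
My plan is to compute the row sum $\sum_{t=1}^{N} a_{kt}$ directly, combining the three indicator cases and then recognizing a probabilistic interpretation of the result.

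First I would isolate the contribution of the diagonal term: $\sum_{t=1}^{N}\1_{t=k} = 1$. For the two remaining pieces I would perform a change of summation index to eliminate $k$. In the term with $t<k$, I would set $u = k-t$; as $t$ runs from $1$ to $k-1$, $u$ runs from $1$ to $k-1$. In the term with $t>k$, I would set $u = N+k-t$; as $t$ runs from $k+1$ to $N$, $u$ runs from $k$ to $N-1$. The two contributions then glue together and the row sum becomes
$$
\sum_{t=1}^{N} a_{kt} \;=\; 1 \;+\; \sum_{u=1}^{N-1}\sum_{j=1}^{u} f_{j}(u),
$$
which is independent of $k$, as expected.

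Next I would give the probabilistic reading of the inner double sum. Recall that here $f_{j}(u)$ denotes $\Pr(J_{1}+\cdots+J_{j}=u)$ (before reduction mod $N$, which is unnecessary for $u\le N-1$). Since the $J_{i}$ are positive integers, the partial sums $S'_{j}=J_{1}+\cdots+J_{j}$ for $j=1,\dots,n$ are strictly increasing in $j$, hence pairwise distinct. In particular the events $\{S'_{j}=u\}$ are disjoint as $j$ varies, so
$$
\sum_{j=1}^{u} f_{j}(u) \;=\; \Pr\!\bigl(u\in\{S'_{1},\dots,S'_{n}\}\bigr),
$$
where I also use that $f_{j}(u)=0$ for $j>u$ (since $S'_{j}\ge j$) and for $j>n$ (no such variable exists).

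Finally I would swap sum and expectation:
$$
\sum_{u=1}^{N-1}\Pr\!\bigl(u\in\{S'_{1},\dots,S'_{n}\}\bigr) \;=\; \E\!\left[\#\bigl(\{S'_{1},\dots,S'_{n}\}\cap\{1,\dots,N-1\}\bigr)\right].
$$
Because $J_{1}+\cdots+J_{n}=N$ almost surely, the set $\{S'_{1},\dots,S'_{n}\}$ consists of $n$ distinct elements of $\{1,\dots,N\}$, one of which is always $N$. Therefore its intersection with $\{1,\dots,N-1\}$ has cardinality exactly $n-1$, the expectation equals $n-1$, and the row sum equals $1+(n-1)=n$, as claimed.

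The only subtlety is the disjointness argument in the middle step; once one observes that positivity of the $J_{i}$'s forces strict monotonicity of the partial sums (so that at most one $j$ contributes for each $u$), the rest is bookkeeping. No case-by-case analysis of first-order inclusion probabilities is needed, which is what makes the statement clean.
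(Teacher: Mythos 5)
Your proof is correct. Every step checks out: the reindexing $u=k-t$ (for $t<k$) and $u=N+k-t$ (for $t>k$) does glue the two off-diagonal ranges into exactly $\{1,\dots,N-1\}$, the disjointness of $\{S'_j=u\}$ over $j$ follows from positivity of the $J_i$'s, and the terminal observation that $\{S'_1,\dots,S'_n\}$ is a set of $n$ distinct points in $\{1,\dots,N\}$ with exactly one pinned at $N$ gives the deterministic count $n-1$.

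Your route differs from the paper's in the order of summation and in interpretation. The paper applies Fubini the other way: for each fixed $j$ it sums over $t$, recognizing the two off-diagonal pieces as $\Pr(S_j\leq k-1)$ and $\Pr(S_j\geq k)-f_j(N)$, so each $j$ contributes $1-f_j(N)$ and the total is $n-\sum_j f_j(N)=n-1$ since $f_n(N)=1$ and $f_j(N)=0$ for $j<n$. You instead sum over $j$ first for fixed $u$, turning the inner sum into the inclusion probability of the point $u$ in the set of partial sums, and then read the outer sum as the expected cardinality of that set intersected with $\{1,\dots,N-1\}$ --- a quantity that is almost surely constant equal to $n-1$. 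Both arguments evaluate the same double sum $\sum_{u=1}^{N-1}\sum_{j}f_j(u)$ and both ultimately rely on the strict monotonicity of the partial sums (the paper uses it to get $f_j(N)=\1_{j=n}$, you use it for disjointness). What your version buys is a transparent probabilistic explanation of why the answer is $n$ rather than a per-$j$ telescoping identity; what the paper's buys is that it never needs to combine the two off-diagonal ranges, so it reads as a purely mechanical computation. Your additional claim that the row sum is visibly independent of $k$ after reindexing is a nice sanity check the paper does not make explicit.
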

\begin{proof}
We have
$$
a_{kt}=\1_{t=k}+\1_{t<k}\sum_{j=1}^{k-t}f_{j}(k-t)+\1_{t>k}\sum_{j=1}^{N+k-t}f_{j}(N+k-t),
$$
with $f_{j}(t)=0$ if $j<t$, $t\leq 1$, $t>N$ or $j>n$. We also have that $f_{n}(N)=1$ and $f_{j}(N)=0$ if $j<n$. The conclusion follows from
\begin{eqnarray*}
\sum_{t=1}^{N}\1_{t<k}\sum_{j=1}^{k-t}f_{j}(k-t)&=&\sum_{t=1}^{N}\sum_{j=1}^{N}f_{j}(k-t)\1_{j\leq k-t}\1_{j \leq n},\\
  &=&\sum_{j=1}^{n}\sum_{t=1}^{N}f_{j}(k-t)=\sum_{j=1}^{n}\Pr(S_{j}\leq k-1),\mbox{ and}\\
\sum_{t=1}^{N}\1_{t>k}\sum_{j=1}^{N+k-t}f_{j}(N+k-t)&=& \sum_{t=1}^{N}\sum_{j=1}^{N}f_{j}(N+k-t)\1_{j\leq N+k-t}\1_{j\leq n}\1_{t>k},\\
  &=&\sum_{j=1}^{n}\sum_{t=1}^{N}f_{j}(N+k-t)\1_{t>k}=\sum_{j=1}^{n}\left[\Pr(S_{j}\geq k)-f_{j}(N)\right].
\end{eqnarray*}
\end{proof}

\section*{Appendix C: discrete probability distributions}
Let $\R_{+}$ denote the set of positive real numbers,
   $$\Gamma(r,x) = \int_{x}^{+\infty} t^{r-1}e^{-t}\ dt, \quad \gamma(r,x) = \int_0^{x} t^{r-1}e^{-t}\ dt,$$
where $r>0,x>0$  and
$$
\B_x(a,b)=\int_0^x t^{a-1} (1-t)^{b-1}dt ,
\hspace{1cm}
\I_x(a,b)=\frac{\B_x(a,b)}{\B(a,b)},
$$
with $a>0,b>0,0<x<1.$

\begin{table}[htb!]
\begin{center}
\caption{Discrete distributions of probability\label{discrete}}
\footnotesize
\begin{tabular}{p{1.5cm}lccccc}
\hline
 Name
        &  Notation
        &  PMF
        &  Support
        &  Parameters
        &  Mean
        &  Variance \\[3.1mm]
\hline
 Bernoulli
        & $ \mathpzc{Bern}(p) $
        & $p^{x} (1-p)^{1-x} $
        & $ \{0,1\}$
        & $ p\in [0,1] $
        & $ p $
        & $ p(1-p) $ \\[3.1mm]
Forward Bernoulli
        & $ \mathpzc{ForBern}(p) $
        & $  \frac{p^{x}}{p+1} $
        & $ \{0,1\} $
        & $ p\in [0,1] , n\in \N $
        & \multicolumn{2}{c}{(see below the table)}\\[3.1mm]
Binomial
        & $ \mathpzc{Bin}(n,p) $
        & $ \binom{n}{x} p^{x} (1-p)^{n-x} $
        & $ \{0,\dots,n\} $
        & $ p\in [0,1] , n\in \N $
        & $ np $
        & $ np(1-p) $ \\[3.1mm]
Forward Binomial
        & $ \mathpzc{ForBin}(n,p) $
        & $ \frac{\I_{p}(x,n-x+1)}{np+1}$
        & $ \{0,\dots,n\}$
        & $ p\in [0,1] , n\in \N $
        & \multicolumn{2}{c}{(see below the table)}\\[3.1mm]
 Geometric
        & $ \mathpzc{G}(1-p) $
        & $ p(1-p)^{x} $
        & $ \N $
        & $ p\in [0,1] $
        & $ \frac{1-p}{p} $
        & $ \frac{1-p}{p^{2}} $ \\[3.1mm]
Negative Binomial
        & $ \mathpzc{NB}(r,p) $
        & $ \frac{\Gamma(r+x)}{x! \Gamma(r)} p^{r} (1-p)^{x} $
        & $ \N $
        & $ p\in [0,1] , r\in \N^{*} $
        & $ \frac{r(1-p)}{p} $
        & $ \frac{r(1-p)}{p^{2}} $ \\[3.1mm]
 Forward  Negative Binomial
        & $ \mathpzc{ForNB}(r,p) $
        & $ \frac{p\I_{(1-p)}(x,r)}{r(1-p)+p}$
        & $ \N $
        & $ p\in [0,1] , r\in \N^{*} $
        &\multicolumn{2}{c}{(see below the table)}\\[3.1mm]
 Poisson
        & $ \mathpzc{P}(\lambda) $
        & $ \frac{e^{-\lambda} \lambda^x}{x!} $
        & $ \N $
        & $ \lambda\in \R_{+} $
        & $ \lambda $
        & $ \lambda $ \\[3.1mm]
 Forward Poisson
        & $ \mathpzc{ForP}(\lambda) $
        & $ \frac{1}{\lambda+1}\left[\1_{x=0}+\frac{\gamma(x,\lambda)}{(x-1)!}\1_{x\geq 1}\right]$
        & $ \N$
        & $ \lambda\in \R_{+} $
        & \multicolumn{2}{c}{(see below the table)}\\[5.1mm]
 Hypergeo\-metric
        & $ \mathpzc{H}(m,r,R) $
        & $ \frac{\binom{r}{x}\binom{R-r}{m-x}}{\binom{R}{m}} $
        & $ \begin{array}{c}
				\{0,\dots,m\}\bigcap
				\\ \{r+m-R,\dots,r\}
	          \end{array}$
        & $ \begin{array}{c} m,r,R \in \N^{*},\\ m,r\le R \end{array} $
        & $ \frac{mr}{R} $
        & $ \frac{mr(R-r)}{R^{2}}\frac{R-m}{R-1} $ \\[5.1mm]
Negative  Hypergeometric
        & $ \mathpzc{NH}(m,r,R) $
        & $ \frac{\frac{\Gamma{(r+x)}}{\Gamma{(r)}x!}\frac{\Gamma{(R-r+m-x)}}{\Gamma{(R-r)}(m-x)!}}{\frac{\Gamma{(m+R)}}{\Gamma{(R)}m!}}$
        & $ \{0,\dots, m \} $
        & $ \begin{array}{c} m,r,R\in\N^{*}\\ 1\le R-r\end{array} $
        & $ \frac{mr}{R} $
        & $ \frac{mr(R-r)}{R^{2}}\frac{R+m}{R+1} $ \\[3.1mm]
Uniform
        & $ \mathpzc{U}(0,a) $
        & $ \frac{1}{a+1}$
        & $ \{0,\dots, a \} $
        & $ a \in\N$
        & $ \frac{a}{2}$
        & $ \frac{ (a+1)^2-1}{12} $ \\[3.1mm]
\hline
\end{tabular}\\
Expectations and variances of forward distributions are easily computed in function of the first three moments of the original distribution (see Remark~\ref{remarque}).
\end{center}
\end{table}

\clearpage

\newpage


\end{document}